\def \bra#1{\mathinner{\langle{#1|}}}
\def \ket#1{\mathinner{|{#1}\rangle}}
\def \braket#1{\mathinner{\langle {#1} \rangle}}
\def \red #1 {\textcolor{red}{#1}}
\def \blue #1 {\textcolor{blue}{#1}}
\newtheorem{theorem}{Theorem}
\newtheorem{proof}{Proof}
\newtheorem{corollary}{Corollary}
\newtheorem{observation}{Observation}
\tikzstyle{operator} = [draw,fill=white,minimum size=1.5em] 
    \tikzstyle{operator2}=[draw,fill=white, text width=0.4cm, minimum height=2.5cm] 
    \tikzstyle{operator3}=[draw,fill=white, text width=0.8cm, minimum height=1cm] 
    \tikzstyle{phase} = [draw, fill=white,shape=circle,minimum size=12pt,inner sep=0pt]
    \tikzstyle{phase1} = [fill=black, shape=circle,minimum size=3.5pt,inner sep=0pt]
    \tikzstyle{phase2} = [fill=blue, shape=circle,minimum size=7pt,inner sep=1pt]
    \tikzstyle{phase11} = [fill=cyan, shape=circle,minimum size=7pt,inner sep=1pt]
    \tikzstyle{phase12} = [fill=lime, shape=circle,minimum size=7pt,inner sep=1pt]
    \tikzstyle{phase13} = [fill=pink, shape=circle,minimum size=7pt,inner sep=1pt]
    \tikzstyle{phase14} = [fill=lightgray, shape=circle,minimum size=7pt,inner sep=1pt]
    \tikzstyle{phase0} = [draw, fill=white, shape=circle,minimum size=3.5pt,inner sep=0pt]
    \tikzstyle{ellipsis} = [fill,shape=circle,minimum size=2pt,inner sep=0pt]
    \tikzset{meter/.append style={fill=white, draw, inner sep=3, font=\vphantom{A}, minimum width=10, path picture={\draw[black] ([shift={(.05,.2)}]path picture bounding box.south west) to[bend left=50] ([shift={(-.05,.2)}]path picture bounding box.south east);\draw[black,-latex] ([shift={(0,.15)}]path picture bounding box.south) -- ([shift={(.15,-.08)}]path picture bounding box.north);}}}
 \tikzset{cross/.style={path picture={\draw[thick,black](path picture bounding box.north) -- (path picture bounding box.south) (path picture bounding box.west) -- (path picture bounding box.east);
   }},
   crossx/.style={path picture={\draw[thick,black,inner sep=0pt]
   (path picture bounding box.south east) -- (path picture bounding box.north west) (path picture bounding box.south west) -- (path picture bounding box.north east);
   }},
   circlewc/.style={draw,circle,cross,minimum width=0.3 cm},
   }
\begin{document}


\title{Floquet-informed Learning of Periodically Driven Hamiltonians}

\author{Keren Li}
\email{likr@szu.edu.cn}
\affiliation{College of Physics and Optoelectronic Engineering, Shenzhen University, Shenzhen 518060, China}
\affiliation{Quantum Science Center of Guangdong-Hong Kong-Macao Greater Bay Area (Guangdong), Shenzhen 518045, China}

\date{\today}

\begin{abstract}
Characterizing time-periodic Hamiltonians is pivotal for validating and controlling driven quantum platforms,  yet prevailing and unadjusted reconstruction methods demand dense time-domain sampling and heavy post-processing.
We introduce a scalable Floquet-informed learning algorithm that represents the Hamiltonian as a truncated Fourier series and recasts parameter estimation as a compact linear inverse problem in the Floquet band picture.
The algorithm is well-suited to problems satisfying mild smoothness/band-limiting. In this regime, its sample and runtime complexities scale polynomially with the Fourier cutoff, time resolution, and the number of unknown coefficients. For local Hamiltonian models, the coefficients grows polynomially with system size, yielding at most polylogarithmic dependence on the Hilbert-space dimension. 
Furthermore, numerical experiments on one- and two-dimensional Ising and Heisenberg lattices show fast convergence to time resolution and robustness to higher-order perturbations. An adaptive rule learns the cutoff of Fourier series, removing the need to set known truncation \textit{a priori}. 
These features enable practical certification and benchmarking of periodically driven platforms with rapidly decaying higher-order content, and extend naturally to near-periodic drives.
\end{abstract}

\maketitle

\paragraph*{Introduction.}
Periodically driven quantum systems supply an independent control axis, the drive frequency, amplitude, and waveform, that reshapes energy scales and symmetries. By exploiting stroboscopic averaging and micromotion, periodic driving engineers Hamiltonians with interactions and selection rules inaccessible in static settings, thereby advancing quantum simulation, control, and the study of non-equilibrium phases~\cite{Oka2019Floquet,eckstein2024large,Harper2020Topology}. Experimentally, the paradigm has enabled observations of anomalous Floquet insulators~\cite{rechtsman2013photonic}, non-equilibrium quantum many body dynamics~\cite{bordia2017periodically}, and discrete-time crystals~\cite{Else2016Floquet}. 
Furthermore, it has potentials to emulate Kitaev’s honeycomb model~\cite{kalinowski2023non}, realize ultrastrong light--matter coupling~\cite{Akbari2025Floquet}, and probe higher-order Weyl fermions~\cite{zhan2024perspective}. Together, these developments transform Floquet engineering from case-by-case demonstrations into a general approach for engineering Hamiltonians and tunable interactions, opening experimental access to phases and response functions previously out of reach.

Benchmarking such systems is essential. Without reliable reconstruction, small control imperfections accumulate, obscuring topological or time-crystalline signatures~\cite{rudner2020band} and degrading control protocols that assume accurate models~\cite{Viola1999Dynamical}. Central to this is the time-dependent Hamiltonian \(H(t)\), which combines control fields with intrinsic interactions and varies in time, dominating a temporal quantum state \(\ket{\psi(t)}\) via 
\begin{equation}
i \frac{d}{dt} \ket{\psi(t)} = H(t)\ket{\psi(t)}.
\end{equation}
Brute-force characterization scales poorly with system size~\cite{Mohseni2008Quantum}.
To mitigate this, parametric and prior-informed approaches have been developed, including time-series fitting~\cite{Di2009Hamiltonian,Zhang2014Quantum} and eigenstate-based observable reconstruction~\cite{Qi2019determininglocal,Eyal2019Learning,gu2024practical,haah2024learning}, as well as Bayesian inference~\cite{Wang2017Experimental}, machine-learning methods~\cite{Xin2018Local,an2024dual}, and cross-device strategies~\cite{wiebe2014hamiltonian,Chen2024A}.
Although highly effective for time-independent Hamiltonians, these methods face significant challenges when applied to driven systems~\cite{anshu2021sample,Yu2023robustefficient,Han2021Tomography,an2024dual}, typically failing to capture rapid temporal modulation, since most do not explicitly leverage periodicity.

We introduce a Floquet-informed learning algorithm that efficiently reconstructs an unknown time-periodic Hamiltonian on a prepared Floquet eigenstate. Representing \(H(t)\) by a truncated Fourier series and invoking the Floquet theorem~\cite{tsuji2023floquet,Barone1977Floquet}, we recast parameter estimation as a compact linear inverse problem in the Floquet band picture. The procedure achieves polynomial scaling in the initial states required and classical post-processing, and quantify the amount of restrictions required if errors exists. 
The method systematically addresses a broad range of driven dynamics, from explicit truncated Fourier series to those without a clear cutoff via an adaptive truncation selection, and extends to near-periodic drives with decaying high-frequency content. These capabilities position the framework as a practical tool for certifying and benchmarking driven quantum platforms, with immediate relevance to fault-tolerant control, topological-phase verification, and Floquet engineering.

\begin{figure}[t]
  \centering
  \includegraphics[width=0.9\linewidth]{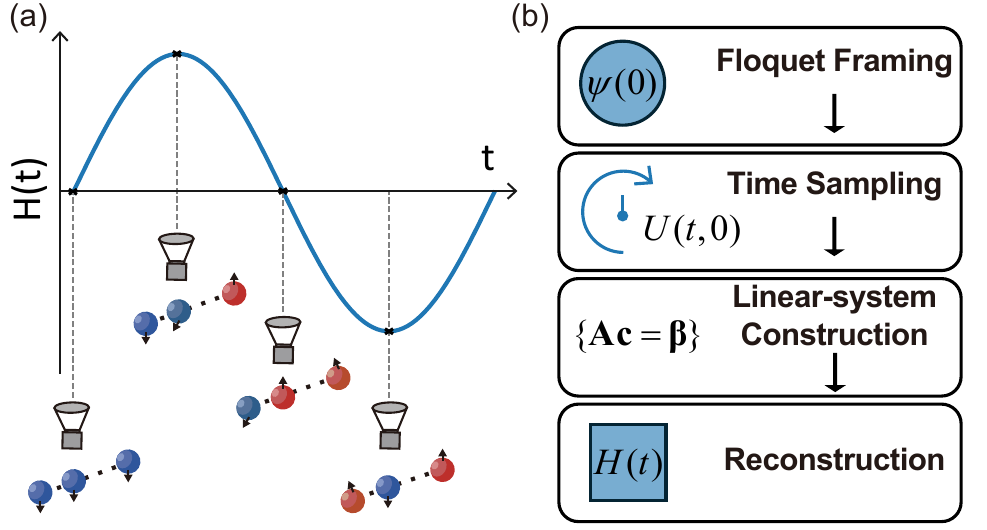}
  \caption{(a) Learning protocol for a time-dependent Hamiltonian \(H(t)\): a periodic drive (blue) is sampled at times \(t_n\), and system is initialized at a Floquet eigenstate. (b) Four-step algorithm: floquet framing, time sampling, linear-system construction, and reconstruction.}
  \label{fig:HL}
\end{figure}

\paragraph*{Algorithm.}
We focus on a time-periodic Hamiltonian with period \(T\) and cutoff \(M\) that can be expressed as,
\begin{eqnarray}\label{eq:H_fre}
    &&H(t)=\sum_{m=-M}^{M} e^{-im\omega t}\,H_m,\quad \omega=\frac{2\pi}{T},\nonumber \\
    &&H_m=\sum_{i=1}^{R} c_{m,i}\,P_{m,i},\quad P_{m,i}\in\mathbb{H}_{l},\ c_{m,i}\in\mathbb{R}.
\end{eqnarray}
Here, \(R\) is the number of chosen local bases per band and $P_{m,i}$ acts on a subspace with dimension \(l=\mathrm{poly}(\log d)\). This locality is physically motivated and crucial for scalability~\cite{Seth1996Universal}. 

Previous work~\cite{Eyal2019Learning} showed that static Hamiltonian (\(M=0\)) can be reconstructed from local observables with resources scaling linearly in system size.
We extend the scenario, using correlators of local observables,  to a general time-periodic Hamiltonian, i.e., \(M>0\). 
The algorithm is summarized in Fig.~\ref{fig:HL}(b), with main steps outlined below and details specified in Appendix~\ref{app:recon}.

\textit{1.Floquet framing.}
Denote \(U(T,0)\) the evolution operator over one period and Floquet Hamiltonian \(H_F\), an effective Hamiltonian defined as 
\[
  e^{-iH_F T}=U(T,0),
\]
an eigenstate \(\ket{\psi(0)}\) of \(H_F\) with any quasienergy \(\varepsilon_\alpha\) is prepared[See Appendix \ref{app:sub}]. The \(T\)-periodic Floquet mode is labeled as
\[
  \ket{u_\alpha(t)} = U(t,0)\,e^{iH_F t}\ket{\psi(0)}\Rightarrow \ket{\psi(t)}=e^{-i\varepsilon_\alpha t}\ket{u_\alpha(t)} .
\]

\textit{2.Time sampling.}
Sample \(\ket{u_\alpha(t_n)}\) at \(N\) equidistant times, \(t_n=nT/N\), with \(N\ge 2M{+}1\) ensuring Nyquist–Shannon coverage of all Fourier modes \(|m|\le M\), avoiding aliasing. 
All discrete Fourier components can be obtained as, 
\begin{eqnarray}
    \label{eq:expand}
    \ket{u_\alpha^k}=\frac{1}{N}\sum_{n=0}^{N-1} e^{ik\omega t_n}\ket{u_\alpha(t_n)}
\end{eqnarray}
Upon substitution into the Schrodinger equation, this produces the band-picture relation
\begin{equation}
  (\varepsilon_\alpha+k\omega)\ket{u_\alpha^k}
  =\sum_{m=k-M}^{k+M} H_{k-m}\ket{u_\alpha^{m}}
  \label{eq:sebp_unique}
\end{equation}
for integers  \(k\). Acting with a set of local observables \(\{A_j\}_{j=1}^{L}\subset\mathbb{H}_l\) gives linear identities
\begin{equation}
  \beta^\alpha_{k,j}=\sum_{m=k-M}^{k+M}\sum_{i=1}^{R} c_{k-m,i}\,a^{k,k-m}_{j,i},
  \label{eq:single_master}
\end{equation} 
where the correlators to be measured are
\[
  \beta^\alpha_{k,j}=(\varepsilon_\alpha+k\omega)\langle u_\alpha^k|A_j|u_\alpha^k\rangle,
  a^{k,k-m}_{j,i}=\langle u_\alpha^k|A_j P_{k-m,i}|u_\alpha^{m}\rangle.
\]

\textit{3.Linear system construction.}
Via varying \(k\) and \(A_j\), we collect sufficient correlators via Eq.~\eqref{eq:expand} on sampled \(\ket{u_\alpha(t)}\) and obtain an over-determined system \(\mathbf A_{(k)}\,\mathbf c=\boldsymbol\beta_{(k)}\) for the unknown coefficients, specified as
\footnotesize
\begin{equation}
\underbrace{\begin{pmatrix}
a^{k, M}_{1,1} & \cdots & a^{k, M}_{1,R} & \cdots & a^{k,-M}_{1,1} & \cdots & a^{k,-M}_{1,R} \\
\vdots & \ddots & \vdots & & \vdots & \ddots & \vdots \\
a^{k, M}_{L,1} & \cdots & a^{k, M}_{L,R} & \cdots & a^{k,-M}_{L,1} & \cdots & a^{k,-M}_{L,R}
\end{pmatrix}}_{\mathbf A_{(k)}\in\mathbb{C}^{L\times (2M+1)R}}
\underbrace{\begin{pmatrix}
c_{M,1}\\ \vdots\\ c_{M,R}\\ \vdots\\ c_{-M,1}\\ \vdots\\ c_{-M,R}
\end{pmatrix}}_{\mathbf c\in\mathbb{R}^{(2M+1)R}}
=
\underbrace{\begin{pmatrix}
\beta^\alpha_{k,1}\\ \vdots\\ \beta^\alpha_{k,L}
\end{pmatrix}}_{\boldsymbol\beta_{(k)}\in\mathbb{C}^{L}} .
\label{eq:master_block}
\end{equation}
\normalsize
For each \(k\), \(L\) linearly independent equations can be generated.
Choosing $L\ge R$ ensures that stacking \(k\) for \(\ge 2M{+}1\) can yield an over-determined system. 

\textit{4. Reconstruction}
Solving \(\mathbf A_{(k)}\,\mathbf c=\boldsymbol\beta_{(k)}\) for chosen \(k\) (e.g., via least squares) yields \(\mathbf c\). Hence the full Hamiltonian \(H(t)\) can be constructed via Eq.~\eqref{eq:H_fre}.

\paragraph*{Assumptions.}
\(\ket{\psi(0)}\) can be prepared via a variant quantum method and \(\varepsilon_\alpha\) can be determined by phase estimation algorithm. 
For correlator estimation, \(\beta^\alpha_{k,j}\), \(a^{k,k-m}_{j,i}\) are assembled from time-domain expectation values using Hadamard tests between the sampled \(\ket{u_\alpha(t)}\).  Remarkably, entire reconstruction algorithm require thus controlled evolutions operation, we can replace it by a two-copy SWAP-based strategy. The detailed information are all listed in Appendix~\ref{app:sub}.

\paragraph*{Complexity.}
Appendix~\ref{app:com} contains a full complexity analysis. Here we give a brief summary.
Each correlator required \(\mathcal{O}(N^2)\) times measurements. By stacking \(k\) (with number of \(k\), \(\ge 2M+1\)), the sample complexity scales as
\(
  \mathcal{O}\!\big(N^2\, (2M{+}1)^2\,L\,R\big).
\)
For fixed time resolution with \(N\propto T\), this becomes 
\(
\mathcal{O}\!\big(T^2 M^2\,\mathrm{poly}(\log d)\big).
\)
Post-processing solves a \(((2M{+}1)L)\times((2M{+}1)R)\) least-squares problem in time
\(
\mathcal{O}\!\big(M^3\,\mathrm{poly}(\log d)\big),
\)
typically dominated by data acquisition for moderate \(M\) and \(\log d\).

\paragraph*{Numerical Experiments.}
We assess the algorithm’s performance via simulations of time-periodic quantum systems. Specifically, we realize Eq.~\eqref{eq:H_fre} with two canonical many-body models, choosing each Fourier component \( H_m \) to be either an Ising or a Heisenberg Hamiltonian.
For the Ising model,
\begin{equation} \label{H_ising}
H_m = \sum_{\langle i,j \rangle} J^{(m)}_{ij} \sigma_z^i \sigma_z^j + \sum_i h^{(m)}_i \sigma_x^i,   
\end{equation}
where \( \sigma^i_{x,z} \) are Pauli matrices, \( J^{(m)}_{ij} \) the nearest-neighbor interaction strengths, and \( h^{(m)}_i \) transverse fields.
For the Heisenberg model, the Hamiltonian is given by
\begin{equation}
H_m = \sum_{k=x,y,z} \sum_{\langle i,j \rangle} J^{(m)}_{ij,k} \sigma_k^i \sigma_k^j + \sum_i h^{(m)}_i \sigma_x^i,
\end{equation}
where \( J^{(m)}_{ij,k} \) encodes direction-dependent couplings along the \(k\)-axis.
As illustrated in Fig.~\ref{fig:simu}(a), we implement the models on 1D chains and 2D lattices. The notation \(\langle i,j \rangle\) indicates nearest-neighbour sites, and periodic boundary conditions are imposed.

The simulations follow the procedure described above and summarized in Table~\ref{model_steps} of Appendix~\ref{app:recon}. Given a time period \(T\), we initialize the system in one eigenstate of the Floquet Hamiltonian, labeled as \(\ket{\psi(0)}\). Time evolution under the time-dependent Hamiltonian \(H(t)\) is simulated by discretizing the period into small steps \(\Delta t\) and computing the evolution operator via matrix exponentials \( \exp(-i H(t) \Delta t) \). The resulting states \(\ket{\psi(t)}\) are used to construct matrix elements of the form \( \bra{u_\alpha(j\Delta t)} A \ket{u_\alpha(j'\Delta t)} \) for \( j, j' \in \{1, \dots, N\} \), which form the correlator and enter the linear system \( \mathbf{A} \mathbf{c} = \boldsymbol{\beta} \). Finally, equations are solved via minimizing least squares, and the reconstructed Hamiltonian is then benchmarked against the target. 

\begin{figure}[ht]
    \centering    
    \includegraphics[width=1\linewidth]{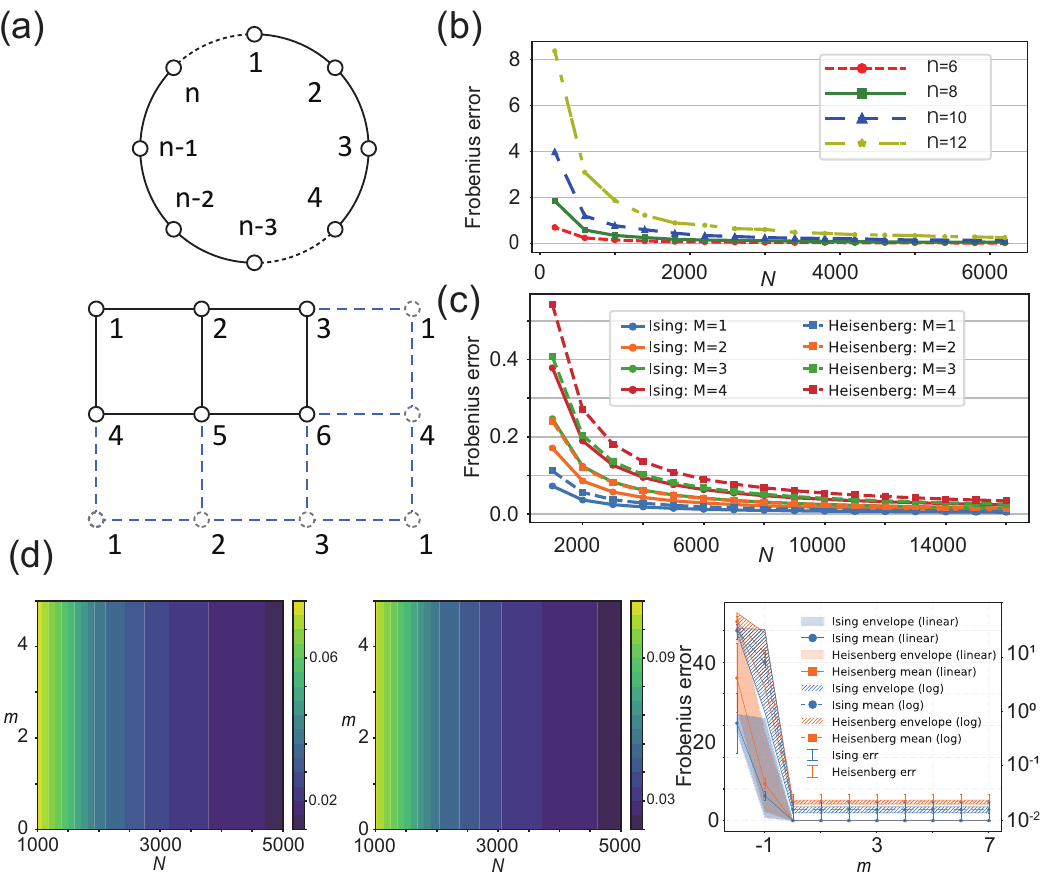}
    \caption{Results of the simulations. 
    (a) is the structure of simulated models. For the 1D model: (b) shows the Frobenius error as a function of the number of discretized time segments \( N \) to reconstruct an Ising Hamiltonian. 
    For the \(2 \times 3\) 2D model: (c) demonstrates the effect of varying the number of Fourier components \( M \) for both Ising and Heisenberg Hamiltonians. (d) presents the impact of varying the number of constraint equations. Error-bars are from averaging standard derivation when calculating Frobenius error at different time.}
    \label{fig:simu}
\end{figure}

In the first set of simulations, each Fourier component \(H_m\) is chosen to be an \(n\)-qubit 1D Ising chain with \(n\in \{6,8,10,12\}\). we restrict the drive to a single Fourier component (\(M=1\)).
The interaction parameters \( J_{ij}^{(m)} \) and transverse fields \( h_i^{(m)} \) are randomly initialized and recorded as target Hamiltonian.
After reconstruction we evaluate the Frobenius norm of the difference between the learned and target Hamiltonian, labeled as Frobenius error.
Fig.~\ref{fig:simu}(b) report the average over the 20 time instants, showing that the norm enjoys an exponentially decay as \(N\) increases across different system sizes.

We then consider \(H_m\) on a \(2\times3\) 2D lattice and study how different truncation order \(M\) affects performance for both Ising and Heisenberg Hamiltonians.
The interaction strengths \( J_{ij,k}^{(m)} \) and transverse fields \( h_i^{(m)} \) drawn randomly for each \( m \in [M] \) and serve as the target Hamiltonian.
By varying \( M = 1, 2, 3, 4 \)), we evaluate reconstruction accuracy using the Frobenius error, shown in Fig.~\ref{fig:simu}(c). Across different \(M\), the error decreases roughly exponentially with \(N\), indicating that the algorithm performs well provided the time discretization is sufficiently fine.

Finally, we examine the effect of stacking Eq.\eqref{eq:master_block} on reconstruction performance. In the original formulation the Fourier index runs over \(k\in[-M,M]\). We extend this to \(k\in[-M-m,M+m]\)to control the number of constraint equations. 
Focusing on the second model with \(M=1\), we first increase \(m\) from \(0\) to \(5\), finding that reconstructions converge and remain accurate for the tested time discretizations (left two panels of Fig.\ref{fig:simu}(d)). We then fix \(N=5000\), sweep  \(m\) from \(-2\) to \(7\), and generate 10 random instances. The rightmost panel in Fig.~\ref{fig:simu}(d) displays the mean Frobenius error across those instances and their envelope, which are additionally displayed on a logarithmic scale. The data indicate that, once the number of equations exceeds \( 2M+1 \), further increases do not yield improved performance.

\paragraph*{Applicability to More General Cases.}
Up to this point the Hamiltonian to be learned is strictly time-periodic with a known truncation order \(M\). In practice, however, \(M\) may be unknown or the Hamiltonian may not be strictly periodic. We therefore consider two more general scenarios. Scenario 1, the Hamiltonian is periodic, but either \(M\) is unknown or additional Fourier components exist beyond the chosen truncation. Scenario 2, the Hamiltonian has no intrinsic period (formally the limit \(T \to \infty\)), so that an infinite Fourier series is required; in this case we assume that high-frequency components are present but their amplitudes decay rapidly.

These situations can be summarized by the generalized form
\begin{equation}
\label{eq:H_general}
H(t) = \sum_{m=-M}^{M} e^{-i m \omega t} H_{m} + \varepsilon_0 \sum_{|m|>M} e^{-i m \omega t} H_{m}.
\end{equation}
The first sum represents the truncated Fourier series that our method will learn, while the second sum represents the neglected higher-order terms that could degrade the algorithm’s accuracy. The parameter \( \varepsilon_0 \) controls the relative contribution of this second term. 

For Scenario 1, Eq.~\eqref{eq:H_general} directly applies with a finite base frequency \( \omega = 2\pi/T\). 
For Scenario~2, we can interpret Eq.~\eqref{eq:H_general} in the limit of an infinitesimal base frequency \( \omega \) (formally \(T \to \infty\)). The first sum (with \(|m|\le M\)) picks up the dominant, low-frequency part of the Hamiltonian, whereas the second sum represents the remaining high-frequency components.  Under the assumption that these high-frequency components decay sufficiently fast, \(H(t)\) behaves as a near-periodic function.

We next benchmark the algorithm on Eq.~\eqref{eq:H_general} with a \(6\)-qubit 1D Ising Hamiltonian.
In the first test, we fix \( M = 5, 7, 9 \) and vary the length of the residual sum from \(0\) to \(9\), with \(\varepsilon_0 = 0.005, 0.01, 0.05\). For each configuration, we generate 10 random Hamiltonians and report the average Frobenius error.
As shown in Fig.~\ref{fig:simu2}(a), when \(\varepsilon_0\) is sufficiently small, the truncated model reproduces the true Hamiltonian accurately, whereas larger \(\varepsilon_0\) introduces significant high-frequency contributions and degrades performance.
In the second test, we fix the total number of Fourier terms in the Hamiltonian to 5, 7, or 9 and vary the algorithmic cutoff \(M\) from \(2\) to \(11\). 
The results in Fig.~\ref{fig:simu2}(b) demonstrate that once $M$ matches or exceeds the true order, reconstruction is accurate. Importantly, overspecifying $M$ does not harm performance, suggesting a robust foundation for adaptive truncation strategies.

When \(M\) is unknown or the Hamiltonian is near-periodic and has decaying high-order content, we use the following adaptive truncation rule: initialize with a small cutoff \(M\); increase \(M\) stepwise and refit the Hamiltonian each time; compute the discrepancy between successive reconstructions; stop when discrepancy falls below a preset threshold.
For strictly periodic \(H(t)\), the process stops once \(M \ge M_{\text{true}}\). For near-periodic \(H(t)\), the process is supposed to stop when adding Fourier modes yields negligible improvement, indicating that all physically relevant dynamics have been captured.

Intuitively, the smoother the drive is in time, the faster its Fourier weights decay.
If $H(t)$ is $T$-periodic and has $p$ continuous derivatives, the $(p\!+\!1)$-st derivative being integrable, then integration by parts shows $\|H_m\|$ falls off like $|m|^{-(p+1)}$~\cite{Trefethen2019Approximation}.  
Truncating at order $M$ then incurs an algebraic error $\|H(t)-H^{(M)}(t)\| = O(M^{-p})$. If, instead, $H(t)$ extends analytically into a strip of the complex plane and remains bounded there, contour shifting implies $\|H_m\|$ decays geometrically, giving an exponential tail $\|H(t)-H^{(M)}(t)\| = O(e^{-\gamma M})$ for some $\gamma>0$.
Thus, if $H(t)$ is time-periodic and smooth or near-periodic with decaying weights, choosing \(M\) sufficiently large ensures that the omitted term in our learning framework is small, and the algorithm can recover the dominant dynamics, supporting our adaptive strategy.

\begin{figure}[ht]
    \centering    
    \includegraphics[width=1\linewidth]{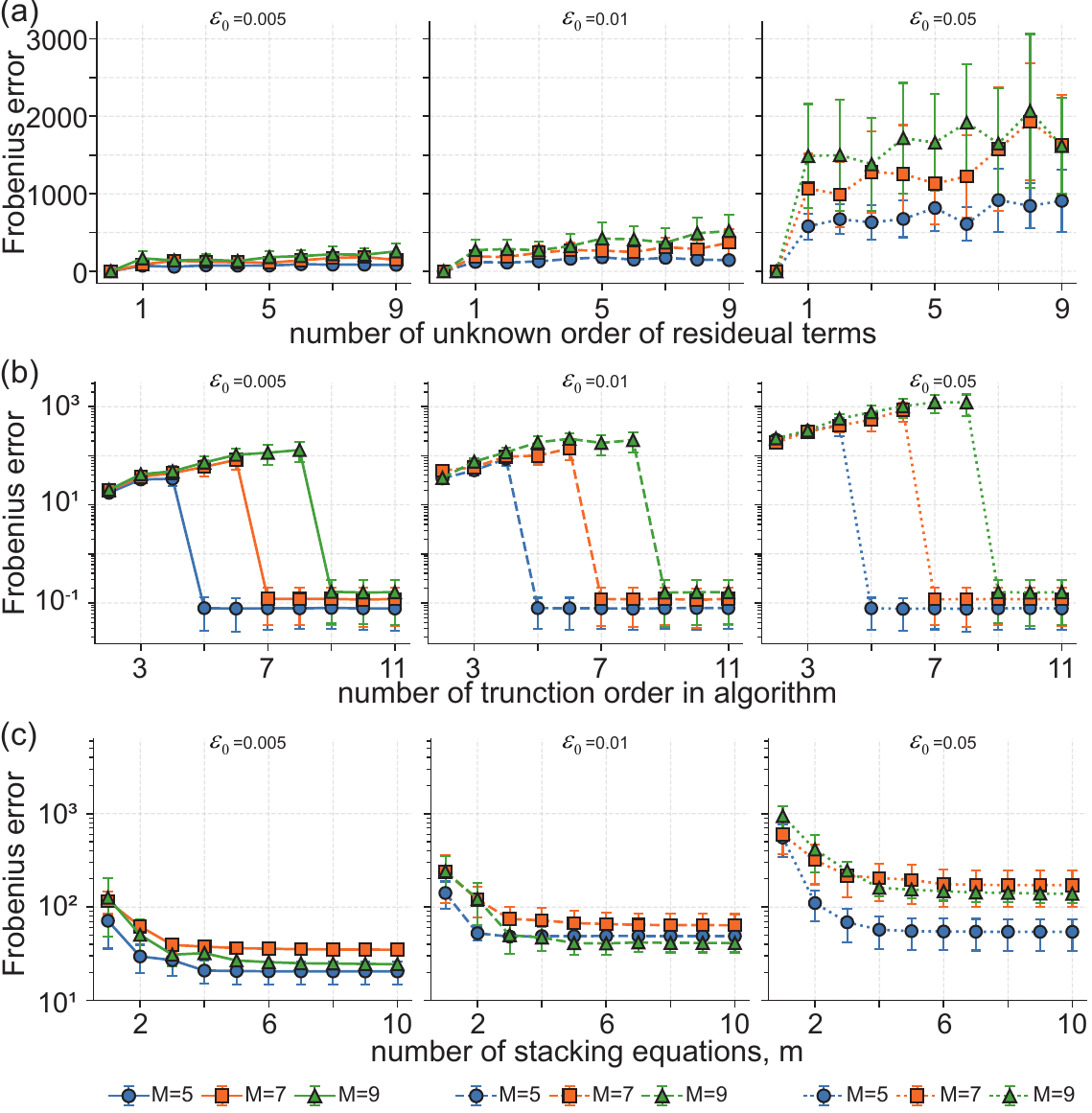}
    \caption{Results of the simulations. 
    (a) presents the Frobenius error if the learned order is fixed while the order of residual terms of Hamiltonian is varying(from 0 to 9). (b) presents the Frobenius error when fixing the total order of learned Fourier components of Hamiltonian and varying the learned order of algorithm(from 2 to 11). (c) presents the Frobenius error with \(m\) from 1 to 10. Error-bars are from averaging standard derivation when calculating Frobenius error at different time. }
    \label{fig:simu2}
\end{figure}

\paragraph*{Constraint Counts.}
In the noiseless, uniqueness of the solution is equivalent to $\mathbf{A}$ having full column rank. By enforcing observable completeness, sufficient stacking equations, i.e., band coverage, and a generic measurement design, one obtains practical sufficient conditions that guarantee $\mathbf{A}$ is full rank. In Fig.~\ref{fig:simu}(d) we see that, increasing the number of stacked equations over \([-M-m,M+m]\) with $m\ge 0$, brings no further benefit. We therefore investigate how stacking additional Floquet-band equations affects reconstruction in the presence of noise. For $M\in\{5,7,9\}$ we vary the number of stacked equations with \(m\) from 1 to 10 and inject noise via the residual terms in Eq.\eqref{eq:H_general}, with $\varepsilon_0\in\{0.005,0.01,0.05\}$ and a single Fourier component. For each configuration we compute the Frobenius errors. The results, shown in Fig.~\ref{fig:simu2}(c), indicate that increasing the number of equations improves performance under noise.

The observation is supported by following results, which is specified in Appendix~\ref{app:theory}. 
Under mild design and noise assumptions, least-squares reconstruction from Eq.~\eqref{eq:master_block} enjoys the following bound.
Let $K=(2M{+}1)R$ be the number of unknown coefficients and $S$ the number of stacked equations.
With probability at least $1-2\delta$,
\begin{equation}
\label{eq:main_text_bound}
\|\widehat{\mathbf c}-\mathbf c^\star\|_2 \;\le\;
C\,\frac{\sigma}{\mu_0}\sqrt{\frac{\mu_1\,(K+\log(1/\delta))}{S}}
\;+\; C'\,\frac{\sqrt{\mu_1}}{\mu_0}\,\varepsilon_0,
\end{equation}
where $\sigma^2$ is the noise scale for measuring \(\mathbf{\beta}\), $\mu_0,\mu_1$ quantify the ellipticity of the design covariance for obtaining elements in \(\mathbf{A}\), and $\varepsilon_0$ captures the truncated Fourier tail in Eq.~\eqref{eq:H_general}.
Consequently, to attain accuracy $\varepsilon$ it suffices to take
\begin{equation}
\label{eq:main_text_sample}
S=\Theta\!\Bigl(\frac{\kappa^2\,\sigma^2}{\varepsilon^2}\,(K+\log(1/\delta))\Bigr), \quad \kappa^2:=\mu_1/\mu_0^2.
\end{equation}
Theoretical considerations and numerical evidence consistently show that an overdetermined linear system mitigates sensitivity to noise and improves the robustness of Hamiltonian reconstruction.

\paragraph*{Conclusion.}
We have introduced a Floquet-informed algorithm for reconstructing time-periodic Hamiltonians. The experimental interface is experimentally acceptable for current platforms: it requires (i) preparation of a single Floquet eigenstate (e.g., via variational routines or adiabatic ramps) and (ii) measurement of correlators using ancilla-assisted estimators (e.g., Hadamard tests). Complexity is dictated primarily by the number of unknown coefficients rather than the Hilbert-space dimension, making the approach amenable to local lattice models and scalable certification workflows. Considering the noises, numerical simulation exhibits robustness.

In contrast to piecewise-time discretization with static Hamiltonian learners, which ignores temporal structure and suffers from poor scaling on fine time grids, our method achieves comparable accuracy with substantially fewer measurements whenever the Fourier spectrum is concentrated at low orders. Our default implementation relies on controlled simulation, but a two-copy, SWAP-based scheme can also be used. The selection between them can be made pragmatically, depending on the available experimental resources.

In the future, to stabilize the reconstruction in large-\(M\) regimes and circumvent the need to prespecify the cutoff, adaptive truncation may be combined with regularization. ridge (\(\ell_2\)) penalties damp high-frequency coefficients while LASSO (\(\ell_1\)) promotes sparse harmonic selection, yielding an effective bias–variance trade-off and improved noise robustness.
Future work includes automated identification of the base frequency, extension to strongly non–band-limited drives, and derivation of minimax finite-sample lower bounds in the presence of experimental noise. Achieving these goals would establish a standardized, platform, agnostic pathway for certifying driven quantum dynamics at scale, an enabling step toward precise quantum control and simulation.

\bibliographystyle{apsrev4-1}
\bibliography{floquent}

\onecolumngrid
\appendix
\newpage
\setcounter{figure}{0}
\renewcommand\thefigure{S\arabic{figure}}
\setcounter{table}{0}
\renewcommand\thetable{S\arabic{table}}

\section{Supplementary Information on Reconstruction Algorithm}
\label{app:recon}
We now provide the detailed derivation of the method for reconstructing a time-periodic Hamiltonian using its Fourier structure.  
By Floquet’s theorem, the evolution of a periodically driven quantum system can be described by a unitary operator over one period,
\begin{eqnarray}
    U(s+T,s) = \exp(-i H_F(s) T),
\end{eqnarray}
where \( H_F(s) \) is the Floquet Hamiltonian, a time-dependent but \(T\)-periodic operator, whose eigenvalues are the quasienergies.  
For evolution under a time-periodic Hamiltonian, the propagator can be expressed as 
\begin{eqnarray}
    U(t,t_0) = P_s(t) \exp\left[-i H_F(s) (t - t_0)\right] P_s(t_0)^{\dagger},
\end{eqnarray}
where \( P_s(t) = U(t,s) \exp[-i H_F(s) (s - t)] \) captures the periodic micromotion.

Without loss of generality, we set \(s = t_0 = 0\), getting a Floquet Hamiltonian shown in main text 
\[
U(T,0) = \exp(-i H_F(0) T),\Rightarrow H_F \equiv H_F(0),
\]
and choose \( \ket{\psi(0)} \) as the initial state of the system.  
If \( \ket{\psi(0)} \) is an eigenstate of \( H_F \), the state at time \( t \) evolves as
\begin{eqnarray} \label{time_st}
  \ket{\psi(t)}=U(t,0)\ket{\psi(0)}=e^{-i \varepsilon_\alpha t}P_0(t) \ket{\psi(0)} = e^{-i \varepsilon_\alpha t} \ket{u_\alpha(t)},
\end{eqnarray}
where \( \ket{u_\alpha(t)} = P_0(t) \ket{\psi(0)} \) is periodic in \( t \), and \( \varepsilon_\alpha \) is the corresponding quasienergy.  
The periodic function \( \ket{u_\alpha(t)} \) admits a Fourier expansion,
\begin{eqnarray} \label{psi_fre}
     \ket{u_\alpha(t)} = \sum_{m} e^{-im \omega t} \ket{u_\alpha^m},  
     \quad 
     \ket{u_\alpha^m} = \frac{1}{T}\int_0^T e^{im \omega t} \ket{u_\alpha(t)}\, dt,
\end{eqnarray}
where \( \ket{u_\alpha^m} \) is the Fourier component at frequency \( m \omega \) with \( \omega \) the driven frequency. In our simulation, we set it as \(4\pi\).

Substituting Eqs.~\eqref{eq:H_fre}, \eqref{psi_fre}, and \eqref{time_st} into the Schr\"odinger equation gives
\begin{eqnarray}
 i\frac{d}{dt} \left( e^{-i \varepsilon_\alpha t} \sum_{m} e^{-im \omega t} \ket{u_\alpha^m} \right) =
 \left( \sum_{m'} e^{-im' \omega t} H_{m'} \right) 
 \left( e^{-i \varepsilon_\alpha t} \sum_{m} e^{-im \omega t} \ket{u_\alpha^m} \right).
\end{eqnarray}
Matching coefficients of \( e^{-ik\omega t} \) yields the stationary (banded) equations:
\begin{eqnarray}
   (\varepsilon_\alpha + k \omega)\, \ket{u_\alpha^k} = \sum_{m = k-M}^{k+M}  H_{k-m} \ket{u_\alpha^m},
\end{eqnarray}
where \( \ket{u_\alpha^k} \) is computed from discrete sampling as
\[
\ket{u_\alpha^k} = \frac{1}{N} \sum_{n=1}^{N} e^{ik \omega n T/N} \ket{u_\alpha(nT/N)},
\]
with \( N \) the number of time samples.

Let \( \{A_{j}\} \) be a set of observables supported on the local Hilbert space \(\mathbb{H}_l\).  
Applying \(A_j\) to both sides of the \(k\)-th equation and taking the expectation value gives
\begin{equation} \label{eq:pro2}
(\varepsilon_\alpha + k\omega) \langle u_\alpha^k | A_{j} | u_\alpha^k \rangle
= \sum_{m = k-M}^{k+M} \langle u_\alpha^k | A_{j} H_{{k-m}} | u_\alpha^m \rangle.
\end{equation}
Using the local Hamiltonian expansion
\begin{equation} \label{eq:local_ham}
H_m = \sum_{i=1}^{R} c_{m, i} P_i,
\end{equation}
Eq.~\eqref{eq:pro2} becomes
\begin{equation} \label{eq:pro2_v2}
\beta^{\alpha}_{k,j} = \sum_{m = k-M}^{k+M} \sum_{i=1}^{R} c_{k-m, i} \, a^{k, k-m}_{j,i},
\end{equation}
where
\[
\beta^{\alpha}_{k,j} = (\varepsilon_\alpha + k\omega) \langle u_\alpha^k | A_{j} | u_\alpha^k \rangle,
\quad
a^{k, k-m}_{j,i} = \langle u_\alpha^k | A_{j} P_{i} | u_\alpha^m \rangle.
\]

By varying \(k\) and \(A_j\) over the chosen ranges , we obtain the system
\begin{equation} \label{master_function}
\begin{pmatrix}
a^{k, M}_{0,0} & a^{k, M}_{0,1} & \dots & a^{k, -M}_{0,R} \\
a^{k, M}_{1,0} & a^{k, M}_{1,1} & \dots & a^{k, -M}_{1,R} \\
\vdots & \vdots & \ddots & \vdots \\
a^{k, M}_{L,0} & a^{k, M}_{L,1} & \dots & a^{k, -M}_{L,R}
\end{pmatrix}
\begin{pmatrix}
c_{M,0} \\
c_{M,1} \\
\vdots \\
c_{-M,R}
\end{pmatrix}
=
\begin{pmatrix}
\beta^{\alpha}_{k,0} \\
\beta^{\alpha}_{k,1} \\
\vdots \\
\beta^{\alpha}_{k,J}
\end{pmatrix}.
\end{equation}
Here, the number of unknowns is \(N_{\text{var}} = (2M + 1) R\).  
Choosing at least \(2M+1\) distinct \(k\) values ensures that the total number of equations \(N_{\text{eq}}\) exceeds \(N_{\text{var}}\), allowing a unique solution.

We write the compact form
\begin{equation} \label{eq:linear_system_matrix}
\mathbf{A} \mathbf{c} = \boldsymbol{\beta},
\end{equation}
where \(\mathbf{A} \in \mathbb{C}^{N_{\text{eq}}\times N_{\text{var}}}\) has entries \(a^{k,k-m}_{j,i}\),  
\(\mathbf{c}\) collects all \(c_{m,i}\), and \(\boldsymbol{\beta}\) contains all \(\beta^{\alpha}_{k,j}\).  
Solving Eq.~\eqref{eq:linear_system_matrix} yields the Fourier components \(H_m\) of the target Hamiltonian.

\floatname{algorithm}{Table}
\begin{figure}
\begin{algorithm}[H]
\caption{Hamiltonian Learning Algorithm}
\label{model_steps}
\begin{algorithmic}[1]
    \State \textbf{Input}: Unknown time-periodic Hamiltonian \( H(t) \) with known period \( T \) and bounded \( M \), such that
    \[
    H(t) = \sum_{m=-M}^{M} e^{-im2\pi t/T} H_m
    \]    
    \State \textbf{Step 1: Floquet framing}
    \State Prepare an eigenstate of the Floquet Hamiltonian \( H_F \), denoted as \( \ket{\psi(0)} \).
    
    \State \textbf{Step 2: time sampling}
    \State Discretize the time interval \( T \) into \( N = T/\Delta t \) segments.
    \State Design a scattering circuit (see Figure~\ref{scattering_cir}) to obtain the matrix elements \( \bra{u_{\alpha}(j \Delta t)} A \ket{u_{\alpha}(j' \Delta t)} \) for \( j, j' \in \{1, 2, \dots, N\} \).
    
    \State \textbf{Step 3: Linear system construction}
    \State Acquire sufficient matrix elements \( \bra{u^{n'}_{\alpha}} A \ket{u^{n}_{\alpha}} \) by varying \( k \) from \(-M-1 \) to \( M+1 \) and \( \{A\} \). 
    \State Construct the linear system of equations \( \mathbf{A} \mathbf{c} = \boldsymbol{\beta} \) as described in Eq.~\eqref{master_function}.
    
    \State \textbf{Step 4: Reconstruction}
    \State Verify that the coefficient matrix \( \mathbf{A} \) has sufficient independent equations to ensure a unique solution.
    \State Solve for the coefficients \( c_{k,i} \) using a least-squares approach.
    \State Reconstruct the Hamiltonian \( H(t) \) using the solved coefficients \( c_{k,i} \).
    
    \State \textbf{Output}: Reconstructed Hamiltonian coefficients \( \{c_{k,i}\} \) and a comparison to the target Hamiltonian.
\end{algorithmic}
\end{algorithm}
\end{figure}

\section{Supplementary Information on Subroutines}
\label{app:sub}

\subsection{Preparation of Eigenstate of $H_F$}

Algorithm for initialization is employed to prepare an initial state $\ket{\psi(0)}$, which is an eigenstate of the Floquet Hamiltonian $H_F$.  
This is implemented using a parameterized encoding circuit $U(\bm{\theta})$ and its conjugate decoding circuit $U^{\dagger}(\bm{\theta})$, as shown in Fig.~\ref{fig:steady_state}.

The algorithm assumes that $U(T,0)$ can be queried repeatedly and that the encoding circuit is parameterized by $U(\bm{\theta})$, which include the expression of target state.  
We formulate the problem as an optimization whose objective is to maximize the quantity
\begin{eqnarray}
    \frac{\|\bra{0}\, U^{\dagger}(\bm{\theta}) \, U(T,0) \, U(\bm{\theta}) \, \ket{0}\|}{P},
\end{eqnarray} 
where $P$ is a factor given as probability measured in circuit of Fig.~\ref{fig:factor}, 
\[
P=\bra{0}\, U^{\dagger}(\bm{\theta}) \,U^{\dagger}(T,0) U(T,0) \, U(\bm{\theta}) \, \ket{0}
\]
and \(\bm{\theta}\) are parameters to be learned. In our case, as \(H(t)\) keeps Hermitian, the entire process is unitary and \(P=1\). 
Figure~\ref{fig:steady_state} illustrates this procedure, where the probability of measuring $\ket{0}$ serves as the cost function.  
Various hybrid optimization protocols have been investigated for this task, and we can adopt an iterative algorithm.  
In each iteration, the query complexity scales proportionally with the number of parameters.  
The total complexity is difficult to predict, since the number of iterations required for convergence is problem-dependent, which remains an open question.

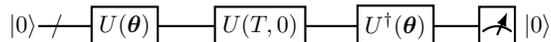
\begin{figure}[!h]
   \centerline{
   \begin{tikzpicture}[thick]
   \ctikzset{scale=1.8}
   \tikzstyle{every node}=[font=\normalsize,scale=0.9]
     \tikzstyle{operator} = [draw,shape=rectangle
     ,fill=white,minimum width=1em, minimum height=1em] 
     \tikzstyle{operator2} = [draw,shape=rectangle,fill=white,minimum width=3em, minimum height=9.5em] 
     \tikzstyle{operator22} = [draw,shape=rectangle,fill=white,minimum width=3em, minimum height=9.5em] 
     \tikzstyle{operator3} = [draw,shape=rectangle,fill=white,minimum width=3em, minimum height=1em] 
     \tikzstyle{operator4} = [draw,shape=rectangle,dashed, minimum width=1.5cm, minimum height=1cm] 
     \tikzstyle{operator5} = [draw,shape=rectangle,dashed, minimum width=5.75cm, minimum height=3cm] 
     \tikzstyle{operator6} = [draw=pink,shape=rectangle,dashed, minimum width=5cm, minimum height=4cm] 
     \tikzstyle{phase} = [fill,shape=circle,minimum size=3pt,inner sep=0pt]
     \tikzstyle{surround} = [fill=blue!10,thick,draw=black,rounded corners=2mm]
     \tikzstyle{ellipsis} = [fill,shape=circle,minimum size=2pt,inner sep=0pt]
     \tikzstyle{ellipsis1} = [draw,shape=circle,minimum size=2pt,inner sep=0pt]
     \tikzstyle{ellipsis3} = [draw,shape=circle,fill=white, minimum size=2pt,inner sep=1pt]
     \tikzstyle{ellipsis4} = [draw,shape=circle,fill=white, minimum size=4pt,inner sep=3pt]
     \tikzstyle{ellipsis2} = [draw,shape=circle,minimum size=0.5pt,inner sep=0pt]
     \tikzset{meter/.append style={fill=white, draw, inner sep=5, rectangle, font=\vphantom{A}, minimum width=15, 
     path picture={\draw[black] ([shift={(.05,.2)}]path picture bounding box.south west) to[bend left=40] ([shift={(-.05,.2)}]path picture bounding box.south east);\draw[black,-latex] ([shift={(0,.15)}]path picture bounding box.south) -- ([shift={(.15,-.08)}]path picture bounding box.north);}}}
     \node at (0.5,-1.5) (qin){$\ket{0}$};
     \node at (0.55,-1.5) (q3) {};
     \node[] (end3) at (4,-1.5) {} edge [-] (q3);
     \node at (0.75,-1.5) (qin){$/$};
     \node[operator] (op22) at (1.25,-1.5) {$U(\bm{\theta})$} ;
     \node[operator] (op22) at (2.25,-1.5) {$U(T,0)$} ;
     \node[operator] (op22) at (3.25,-1.5) {$U^{\dagger}(\bm{\theta})$} ;
     \node[meter] at (4,-1.5) (qin){};
     \node at (4.3,-1.5) (q3) {$\ket{0}$};
   \end{tikzpicture} } 
   \caption{Circuit for optimizing a steady state (Floquet eigenstate). The cost function is the probability of measuring $\ket{0}$ at the output.}   
\label{fig:steady_state} 
\end{figure}

\begin{figure}[!h]
   \centerline{
   \begin{tikzpicture}[thick]
   \ctikzset{scale=1.8}
   \tikzstyle{every node}=[font=\normalsize,scale=0.9]
     \tikzstyle{operator} = [draw,shape=rectangle
     ,fill=white,minimum width=1em, minimum height=1em] 
     \tikzstyle{operator2} = [draw,shape=rectangle,fill=white,minimum width=3em, minimum height=9.5em] 
     \tikzstyle{operator22} = [draw,shape=rectangle,fill=white,minimum width=3em, minimum height=9.5em] 
     \tikzstyle{operator3} = [draw,shape=rectangle,fill=white,minimum width=3em, minimum height=1em] 
     \tikzstyle{operator4} = [draw,shape=rectangle,dashed, minimum width=1.5cm, minimum height=1cm] 
     \tikzstyle{operator5} = [draw,shape=rectangle,dashed, minimum width=5.75cm, minimum height=3cm] 
     \tikzstyle{operator6} = [draw=pink,shape=rectangle,dashed, minimum width=5cm, minimum height=4cm] 
     \tikzstyle{phase} = [fill,shape=circle,minimum size=3pt,inner sep=0pt]
     \tikzstyle{surround} = [fill=blue!10,thick,draw=black,rounded corners=2mm]
     \tikzstyle{ellipsis} = [fill,shape=circle,minimum size=2pt,inner sep=0pt]
     \tikzstyle{ellipsis1} = [draw,shape=circle,minimum size=2pt,inner sep=0pt]
     \tikzstyle{ellipsis3} = [draw,shape=circle,fill=white, minimum size=2pt,inner sep=1pt]
     \tikzstyle{ellipsis4} = [draw,shape=circle,fill=white, minimum size=4pt,inner sep=3pt]
     \tikzstyle{ellipsis2} = [draw,shape=circle,minimum size=0.5pt,inner sep=0pt]
     \tikzset{meter/.append style={fill=white, draw, inner sep=5, rectangle, font=\vphantom{A}, minimum width=15, 
     path picture={\draw[black] ([shift={(.05,.2)}]path picture bounding box.south west) to[bend left=40] ([shift={(-.05,.2)}]path picture bounding box.south east);\draw[black,-latex] ([shift={(0,.15)}]path picture bounding box.south) -- ([shift={(.15,-.08)}]path picture bounding box.north);}}}
     \node at (0.25,-1.5) (qin){$\ket{0}$};
     \node at (0.3,-1.5) (q3) {};
     \node[] (end3) at (3,-1.5) {} edge [-] (q3);
     \node at (0.5,-1.5) (qin){$/$};
     \node[operator] (op22) at (1,-1.5) {$U(\bm{\theta})$} ;
     \node[operator] (op22) at (1.75,-1.5) {$U(T,0)$} ;
     \node[meter] at (3,-1.5) (qin){};
     \node at (3.3,-1.5) (q3) {$I$};
   \end{tikzpicture} } 
   \caption{Circuit for given a probability as factor.}   
\label{fig:factor} 
\end{figure}
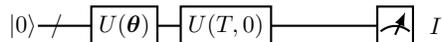

With respect to eigenphase (eigenvalue) extraction, if a controlled simulation of the time evolution \(U(t,0)\) is available, one may employ the Hadamard test or standard phase estimation.

If controlled simulation is unavailable, an alternative is to use two copies together with a SWAP operation. Let
\[
\rho = \ket{\psi(0)}\!\bra{\psi(0)},\qquad
\rho' = \ket{\psi(t)}\!\bra{\psi(t)},\qquad
\ket{\psi(t)} = U(t,0)\ket{\psi(0)}.
\]
For any observable \(A_r\) acting on the first copy, one has
\begin{equation}
\label{eq:swap-correlator}
\mathrm{Tr}\!\big[(A_r\otimes I)\,\mathrm{SWAP}\,(\rho\otimes\rho')\big]
= \bra{\psi(t)}A_r\ket{\psi(0)}\;\bra{\psi(0)}U(t,0)\ket{\psi(0)}.
\end{equation}
The left-hand side is experimentally accessible because \(\mathrm{SWAP}\) can be decomposed into a sum of measurable observables.
A convenient choice is \(A_r=\ket{0}\!\bra{0}\). In that case
\begin{equation}
\label{eq:swap-zeroprojector}
\mathrm{Tr}\!\big[(\ket{0}\!\bra{0}\otimes I)\,\mathrm{SWAP}\,(\rho\otimes\rho')\big]
= \braket{\psi(t)|0}\,\braket{0|\psi(0)}\;\braket{\psi(0)|U(t,0)|\psi(0)}.
\end{equation}
The overlaps \(\braket{\psi(t)|0}\) and \(\braket{0|\psi(0)}\) can be estimated by projective measurements onto \(\ket{0}\) with reference phase as 0. Hence \(\braket{\psi(0)|U(t,0)|\psi(0)}\) is obtained by dividing the measured SWAP correlator~\eqref{eq:swap-zeroprojector} by the independently estimated factor \(\braket{\psi(t)|0}\,\braket{0|\psi(0)}\). This only require a explict decomposition of SWAP matrix across two systems.

\subsection{Extraction of Correlators}
If if a controlled simulation of the time evolution \(U(t,0)\) is available, to obtain correlators of the form $\bra{u^{n'}_{\alpha}} A \ket{u^n_{\alpha}}$ with $n \neq n'$ in general, we employ the Hadamard test circuit.  
From Eq.~\eqref{psi_fre}, one can write
\begin{eqnarray} \label{observables}
\bra{u^{n'}_{\alpha}} A \ket{u^n_{\alpha}} 
= \sum_{j, j'} \bra{u_{\alpha}( j' \Delta t)} \, e^{-i n' \omega j' \Delta t} \, A \, e^{i n \omega j \Delta t} \, \ket{u_{\alpha}( j \Delta t)},
\end{eqnarray}
where $\ket{u_{\alpha}( j \Delta t)} = U(j\Delta t) \ket{\psi(0)}$.  

If the evolution period is discretized into $N_t$ time steps, there are in total $N_t^2$ such terms in Eq.~\eqref{observables}.  
We classify them into:

(a) \((j = j') \land (n = n')\), and  
(b) \((j \neq j') \lor (n \neq n')\).

For case (a), each term
\begin{eqnarray}
    \bra{u_{\alpha}( j \Delta t)}  A \ket{u_{\alpha}( j \Delta t)}, \quad j \in [N_t],
\end{eqnarray}
can be directly measured: evolve the system to $t = j \Delta t$ and measure observable $A$.

For case (b), we use the Hadamard test circuit in Fig.~\ref{scattering_cir}, where $l$ and $k$ are integers specifying the initial time index and the time separation, respectively.  
After the controlled-$A'$ operation and before measurement, the joint state is
\begin{eqnarray}
 \ket{0} \ket{u(l)} + e^{i\delta} \ket{1} A \ket{u(l+k)},
\end{eqnarray}
where $\ket{u(l)} \equiv \ket{u_{\alpha}(l \Delta t)}$ and $\delta = \omega \, (n l + n k - n' l) \, \Delta t$.  
Measuring the ancilla qubit yields
\begin{eqnarray}\label{real_p}
\langle \sigma_x \otimes I \rangle &=& \mathrm{Re} \left[ e^{i\delta} \bra{u(l)} A \ket{u(l+k)} \right],  \\ \label{imag_p}
\langle \sigma_y \otimes I \rangle &=& \mathrm{Im} \left[ e^{i\delta} \bra{u(l)} A \ket{u(l+k)} \right].
\end{eqnarray}  
Combining Eqs.~\eqref{real_p} and \eqref{imag_p} reconstructs the complex correlator for $(j \neq j') \lor (n \neq n')$ in Eq.~\eqref{observables}.

In summary, estimating a given $\bra{u^{n'}_{\alpha}} A \ket{u^n_{\alpha}}$ requires $N_t$ direct measurements for case (a) terms, and $\mathcal{O}(N_t^2)$ Hadamard-test circuits for case (b) terms.

\begin{figure}[!h]
   \centerline{
   \begin{tikzpicture}[thick]
   \ctikzset{scale=1.8}
   \tikzstyle{every node}=[font=\normalsize,scale=0.9]
     \tikzstyle{operator} = [draw,shape=rectangle
     ,fill=white,minimum width=1em, minimum height=1em] 
     \tikzstyle{operator2} = [draw,shape=rectangle,fill=white,minimum width=3em, minimum height=9.5em] 
     \tikzstyle{operator22} = [draw,shape=rectangle,fill=white,minimum width=3em, minimum height=9.5em] 
     \tikzstyle{operator3} = [draw,shape=rectangle,fill=white,minimum width=3em, minimum height=1em] 
     \tikzstyle{operator4} = [draw,shape=rectangle,dashed, minimum width=1.5cm, minimum height=1cm] 
     \tikzstyle{operator5} = [draw,shape=rectangle,dashed, minimum width=5.75cm, minimum height=3cm] 
     \tikzstyle{operator6} = [draw=pink,shape=rectangle,dashed, minimum width=5cm, minimum height=4cm] 
     \tikzstyle{phase} = [fill,shape=circle,minimum size=3pt,inner sep=0pt]
     \tikzstyle{surround} = [fill=blue!10,thick,draw=black,rounded corners=2mm]
     \tikzstyle{ellipsis} = [fill,shape=circle,minimum size=2pt,inner sep=0pt]
     \tikzstyle{ellipsis1} = [draw,shape=circle,minimum size=2pt,inner sep=0pt]
     \tikzstyle{ellipsis3} = [draw,shape=circle,fill=white, minimum size=2pt,inner sep=1pt]
     \tikzstyle{ellipsis4} = [draw,shape=circle,fill=white, minimum size=4pt,inner sep=3pt]
     \tikzstyle{ellipsis2} = [draw,shape=circle,minimum size=0.5pt,inner sep=0pt]
     \tikzset{meter/.append style={fill=white, draw, inner sep=5, rectangle, font=\vphantom{A}, minimum width=15, 
     path picture={\draw[black] ([shift={(.05,.2)}]path picture bounding box.south west) to[bend left=40] ([shift={(-.05,.2)}]path picture bounding box.south east);\draw[black,-latex] ([shift={(0,.15)}]path picture bounding box.south) -- ([shift={(.15,-.08)}]path picture bounding box.north);}}}
     \node at (0.5,-0.5) (qin){$\ket{0}$};
     \node at (0.55,-0.5) (q3) {};
     \node[] (end3) at (4,-0.5) {} edge [-] (q3);
     \node at (0.5,-1) (qin){$\ket{0}$};
     \node at (0.55,-1) (q3) {};
     \node[] (end3) at (4,-1) {} edge [-] (q3);
     \node at (0.7,-1.) (qin){$/$};
     \node[operator] (op22) at (1.25,-0.5) {$H$} ;
     \node[operator] (op22) at (1.25,-1.) {$U(0, l\Delta t)$} ;
     \node[ellipsis1] (op23) at (2.5,-0.5){};
     \node[] (end3) at (2.5,-1) {} edge [-] (op23);
     \node[operator] (op22) at (2.5,-1.) {$U(l\Delta t, (l+k)\Delta t)$};
     \node[ellipsis1] (op23) at (3.5,-0.5){};
     \node[] (end3) at (3.5,-1) {} edge [-] (op23);
     \node[operator] (op22) at (3.5,-1.) {$A'$};
     \node[meter] at (4,-0.5) (qin){};
   \end{tikzpicture} } 
   \caption{Hadamard test circuit for measuring case (b) correlators with $j \neq j'$ or $n \neq n'$. The controlled-$A'$ operation implements $A$ conditioned on the ancilla qubit.}   
\label{scattering_cir} 
\end{figure}
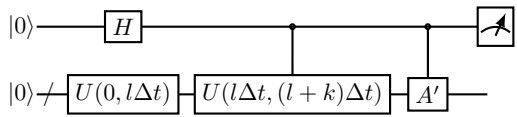

Alternatively, when controlled quantum simulation is unavailable we can adopt a two-copy SWAP-based strategy. Let
\[
\rho(t)=\ket{u(t)}\!\bra{u(t)},\qquad
\rho(t')=\ket{u(t')}\!\bra{u(t')}.
\]
For any observable \(A_r\) acting on the first copy one has
\[
\mathrm{Tr}\!\big[(A_r\otimes I)\,\mathrm{SWAP}\,(\rho(t)\otimes\rho(t'))\big]
= \langle u(t')|A_r|u(t)\rangle\;\langle u(t)|u(t')\rangle.
\]
The overlap \(s=\langle u(t)|u(t')\rangle\) can be estimated independently (e.g. via the eigenphase extraction method described above). Dividing the measured value by \(s\) then yields \(\langle u(t')|A_r|u(t)\rangle\). This approach removes the need for controlled evolution at the expense of preparing two copies and a modest increase in the number of experimental repetitions.

\section{Supplementary Information on Complexity}
\label{app:com}
We now analyze the sample and time complexities of our method, assuming ideal (noiseless) measurements.  
As described above, each measurement requires preparing an eigenstate \(\ket{\psi(0)}\) of the Floquet Hamiltonian \(H_F\).  
The central question is therefore: 
\emph{how many times must \(\ket{\psi(0)}\) be prepared}?

This value is determined by the requirement to estimate a sufficient set of matrix elements \(\bra{u_{\alpha}(j \Delta t)} A \ket{u_{\alpha}(j' \Delta t)}\), from which we construct \(\bra{u^{n'}_{\alpha}} A \ket{u^{n}_{\alpha}}\).  
From Eq.~\eqref{master_function}, the linear system involves \((2M + 1)R\) unknown parameters, where \(R \sim \mathrm{poly}(\log d)\) is the number of local basis operators.  
With numerical tests supporting, number of equations is larger than \((2M+1)R\), the solution can be obtained. We label the number of equation as \((2M+1)L\), where \(L\) labels stacking the equations. 
Thus, the number of distinct observable–Fourier-component pairs to be measured therefore scales as \((2M + 1)^2 RL\), with \(L\sim R\). 
Each such quantity can be obtained from \(\mathcal{O}(N^2)\) time-domain correlators, where \(N\) is the number of discrete time samples per period, \(T=N\Delta t\) with \(\Delta t\) the time resolution. Here we regard each measurement can be extracted accurately and count it as \(\mathcal{O}(1)\).

Consequently, the total number of state preparations required is
\[
\mathcal{O}\!\left({N^2 M^2} \,\mathrm{poly}(\log d)\right),
\]
where the factor \(\mathrm{poly}(\log d)\) approximates \(RL\).  
We refer to this as the sample complexity of our method (i.e., the number of quantum state preparations), not yet including classical post-processing.

Next, consider the total quantum evolution time.  
Each matrix element \(\bra{u_{\alpha}(j \Delta t)} A \ket{u_{\alpha}(j' \Delta t)}\) requires evolving the system for a duration of order \(T\), which is conducted via the Hadamard-test circuit or SWAP-based strategy. 
Thus, for \(\mathcal{O}(N^2 M^2\mathrm{poly}(\log d))\) measurements, the total evolution time scales as
\[
\mathcal{O}\!\left(N^2 M^2 \, T \,\mathrm{poly}(\log d)\right).
\]
From numerical experiments as a function of \(N\), the estimation error decays rapidly in all cases. 

Remarkably, although the SWAP-based strategy removes the need for controlled evolution, it can incur prohibitive measurement overheads: decomposing \(\mathrm{SWAP}^{\otimes n}\) into measurable Hermitian terms yields \(2^n\sim 4^{n}\) terms, i.e.\ exponential scaling in system size. Deploying this approach on larger systems therefore requires more efficient measurement schemes, especially for some specific designed \((A_r\otimes I)~\mbox{SWAP}\). When considering the cost, thus, we instead report the measurement overheads under the assumption that controlled simulation is available.

Finally, in the classical post-processing step we solve the \((2M + 1)L \times (2M + 1)R\) linear system from Eq.~\eqref{master_function}.  
Using standard dense linear-algebra methods (Gaussian elimination, LU decomposition, or least-squares solvers), this requires
\[
\mathcal{O}\!\big(M^3\,\mathrm{poly}(\log d)\big)
\]
time, since the matrix dimension is proportional to \(M \mathrm{poly}(\log d)\) and solving such a system scales cubically in its size.  
This post-processing cost is typically negligible compared to quantum measurement time unless \(M\) or \(\log d\) are very large; if the system matrix is sparse, more efficient solvers may be used.
In Table~\ref{tab:resources}, we summarize resources required for this algorithm considering a noiseless case.
\begin{table}[t]
\centering
\setlength{\tabcolsep}{6pt}
\renewcommand{\arraystretch}{1.15}
\caption{Resource scaling for reconstructing the Fourier components $\{H_m\}$ of a $T$-periodic Hamiltonian.  
Here $M$ is the truncation order, $R,L\sim \mathrm{poly}(\log d)$ is the number of local basis operators, $N$ is the number of time samples per period, and $\epsilon$ is the target additive precision per observable.}
\label{tab:resources}
\begin{tabular}{lll}
\hline\hline
\textbf{Resource} & \textbf{Scaling} & \textbf{Notes/Assumptions} \\
\hline
Eigenstate prep.\ of $H_F$ & -- & Via variant method; quasienergy by phase estimation/SWAP-based strategy \\
Distinct observables & $\Theta\!\big((2M{+}1)^2 RL\big)$ & Via Eq.~\eqref{master_function}\\
Correlators per observable & $\mathcal{O}(N^2)$ & Hadamard-test circuit\\
Sample complexity & $ \mathcal{O}\!\left(N^2 M^2 RL\right)$ &
Number of state preparations \\
Total evolution time & $ \mathcal{O}\!\left(N^2 M^2 T\, RL \right)$ &
Avg.\ evolution $\sim \Theta(T)$ per estimate \\
Post-processing time & $\mathcal{O}\!\big(((2M{+}1)RL)^3\big)$ &
Dense least-squares / LU decomposition\\
Additional Memory(quantum)   & -- & Via variant method \\
& $\mathcal{O}\!\big(\log (1/\epsilon))$ &  Quasienergy by phase estimation with $\epsilon$ accuracy \\
& 1 &  Correlators extraction via hadamard test \\
Memory (solver) & $\mathcal{O}\!\big(((2M{+}1)RL)^2\big)$ & Store normal equations / factors \\
\hline\hline
\end{tabular}
\end{table}

\section{Supplementary Information on High-probability Sufficient Constraint Counts}
\label{app:theory}
\paragraph*{Ideal-case uniqueness}
In the absence of noise and model truncation, the equations in $\mathbf A\mathbf c=\boldsymbol\beta$ are exactly consistent with the physical model, and uniqueness reduces to verifying that $\mathbf A$ has full column rank, $\mathrm{rank}(\mathbf A)=(2M{+}1)R$.  
In realistic experimental or numerical settings, however, statistical noise and truncation errors generally render the system inconsistent, and reconstruction is performed via a least-squares fit; in this case, we discuss the robustness to the true solution.

In the ideal case, a practical set operations to guarantee the full column rank are:
\begin{enumerate}
\item Observable completeness: the set $\{A_jP_i\}$ spans the local operator space, ensuring that all coefficient directions are probed.
\item Band coverage: at least $2M{+}1$ distinct band indices $k$ are included.
\end{enumerate}
Under these operations, once $\mathbf A$ has full column rank, the reconstruction $\mathbf c$ is unique in the noiseless regime.

Then we turn to a more realistic setting, where measurement outcomes are subject to statistical noise and the Fourier expansion of $H(t)$ is truncated.  
Our goal is to determine, with high probability, a sufficient number of independent linear constraints (rows in the system $\mathbf{A}\mathbf{c}=\boldsymbol{\beta}$) to guarantee that the least-squares estimator $\widehat{\mathbf{c}}$ is close to the true coefficient vector $\mathbf{c}^\star$.

\paragraph*{Notation.}
Throughout, $\|\cdot\|$ denotes the operator/spectral norm for matrices and the Euclidean norm for vectors;
$\|\cdot\|_{\mathrm F}$ is the Frobenius norm.
For a symmetric matrix $M$, $\lambda_{\min}(M)$ and $\lambda_{\max}(M)$ denote its extreme eigenvalues; 
for a (possibly rectangular) matrix $A$, $\sigma_{\min}(A)$ denotes its smallest singular value.
For a real random variable $X$, $\|X\|_{\psi_2}$ and $\|X\|_{\psi_1}$ are its sub-Gaussian and sub-exponential Orlicz norms, respectively.
We write $\mathbb S^{K-1}=\{u\in\mathbb R^K:\|u\|=1\}$.

\begin{theorem}
\label{thm:T1-Bernstein}
Consider the linear system $\mathbf A\mathbf c=\boldsymbol\beta$ of Eq.~\eqref{eq:master_block} with unknown
$\mathbf c^\star\in\mathbb R^K$, where $K=(2M{+}1)R$.
Assume:
\begin{enumerate}[label=(A\arabic*)]
\item \textbf{Sampling.} The stacked rows of $\mathbf A$ use at least $2M{+}1$ distinct bands $k$.
\item \textbf{Design regularity.} The rows $a_r^\top\in\mathbb R^{K}$ are i.i.d., mean-zero, with covariance
$\boldsymbol\Sigma:=\mathbb E[a_r a_r^\top]$ satisfying $\mu_0\mathbf I\preceq\boldsymbol\Sigma\preceq \mu_1\mathbf I$ for some $0<\mu_0\le \mu_1<\infty$.
Moreover, for every unit $u\in\mathbb S^{K-1}$,
\[
\|\langle a_r,u\rangle\|_{\psi_2}\ \le\ \psi\,\sqrt{u^\top\boldsymbol\Sigma\,u}.
\]
\item \textbf{Noise and model mismatch.} $\boldsymbol\beta=\mathbf A\mathbf c^\star+\boldsymbol\eta+\boldsymbol\varepsilon$, 
where the noise $\eta_r$ are i.i.d.\ mean-zero sub-Gaussian with proxy $\sigma^2$ (e.g.\ $\sigma^2=\Theta(1/N_s)$ for $N_s$ shots),
$\boldsymbol\eta$ is independent of $\{a_r\}$, and $\boldsymbol\varepsilon$ collects neglected Fourier tails (Eq.~\eqref{eq:H_general}); set $\varepsilon_0:=\|\boldsymbol\varepsilon\|_\infty$.
\end{enumerate}
Let $\widehat{\mathbf c}$ be the least-squares estimator. There exist absolute constants $C,c>0$ such that, for any $\delta\in(0,1)$, 
if
\begin{equation}\label{eq:S-bern-main}
S\ \ge\ C\,\frac{\psi^4\mu_1^2}{\mu_0^2}\;K\;\bigl(\log(2K)+\log(1/\delta)\bigr),
\end{equation}
then with probability at least $1-2\delta$,
\begin{align}
\label{eq:bern-smin}
\sigma_{\min}(\mathbf A)\ &\ge\ \sqrt{S\,\mu_0/2},\\[2mm]
\label{eq:bern-error}
\|\widehat{\mathbf c}-\mathbf c^\star\|_2\ &\le\ C\,\frac{\sigma}{\mu_0}\sqrt{\frac{\mu_1\,\bigl(K+\log(1/\delta)\bigr)}{S}}
\;+\;2\,\frac{\sqrt{\mu_1}}{\mu_0}\,\varepsilon_0.
\end{align}
If $\|H_m\|=O(|m|^{-(p+1)})$ (resp.\ $O(e^{-\gamma|m|})$), then $\varepsilon_0=O(M^{-p})$ (resp.\ $O(e^{-\gamma M})$).
\end{theorem}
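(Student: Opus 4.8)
The plan is to control three pieces: (i) the smallest singular value of the random design matrix $\mathbf A$, (ii) the effect of the sub-Gaussian noise $\boldsymbol\eta$ on the least-squares estimate, and (iii) the deterministic bias from the truncated Fourier tail $\boldsymbol\varepsilon$. Write the least-squares estimator as $\widehat{\mathbf c}=(\mathbf A^\top\mathbf A)^{-1}\mathbf A^\top\boldsymbol\beta$ and substitute $\boldsymbol\beta=\mathbf A\mathbf c^\star+\boldsymbol\eta+\boldsymbol\varepsilon$ to get the error decomposition
\begin{equation}
\widehat{\mathbf c}-\mathbf c^\star
=(\mathbf A^\top\mathbf A)^{-1}\mathbf A^\top\boldsymbol\eta
+(\mathbf A^\top\mathbf A)^{-1}\mathbf A^\top\boldsymbol\varepsilon .
\end{equation}
The second term is bounded in norm by $\sigma_{\min}(\mathbf A)^{-1}\,\|\boldsymbol\varepsilon\|_2$, and since $\boldsymbol\varepsilon$ has $S$ entries each of magnitude at most $\varepsilon_0$, $\|\boldsymbol\varepsilon\|_2\le\sqrt{S}\,\varepsilon_0$; combined with a lower bound $\sigma_{\min}(\mathbf A)\ge\sqrt{S\mu_0/2}$ this gives the $O(\sqrt{\mu_1}/\mu_0\cdot\varepsilon_0)$ contribution once one also uses $\|\mathbf A\|\le\sqrt{2S\mu_1}$ with high probability (the $\sqrt{\mu_1}$ actually comes from bounding $\|(\mathbf A^\top\mathbf A)^{-1}\mathbf A^\top\|$ more carefully — see below).

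The heart of the argument is step (i): showing that under (A2) and the sample-size condition \eqref{eq:S-bern-main}, the empirical covariance $\widehat{\boldsymbol\Sigma}=\tfrac1S\mathbf A^\top\mathbf A$ concentrates around $\boldsymbol\Sigma$ in spectral norm. Because the rows $a_r$ are i.i.d.\ mean-zero with the sub-Gaussian marginal bound $\|\langle a_r,u\rangle\|_{\psi_2}\le\psi\sqrt{u^\top\boldsymbol\Sigma u}$, the whitened vectors $\boldsymbol\Sigma^{-1/2}a_r$ are isotropic and $\psi$-sub-Gaussian, so the standard covariance-estimation theorem (e.g.\ Vershynin, Thm.~4.6.1, via a net argument on $\mathbb S^{K-1}$ plus Bernstein for sub-exponential variables $\langle a_r,u\rangle^2$) yields
\[
\bigl\|\widehat{\boldsymbol\Sigma}-\boldsymbol\Sigma\bigr\|
\le C\,\psi^2\mu_1\!\left(\sqrt{\tfrac{K+t}{S}}+\tfrac{K+t}{S}\right)
\]
with probability $\ge 1-2e^{-t}$. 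Choosing $t=\log(2K)+\log(1/\delta)$ and invoking \eqref{eq:S-bern-main} with a large enough absolute constant forces the right-hand side below $\mu_0/2$, hence $\lambda_{\min}(\mathbf A^\top\mathbf A)\ge S\mu_0/2$ and $\lambda_{\max}(\mathbf A^\top\mathbf A)\le 2S\mu_1$, giving \eqref{eq:bern-smin}. Assumption (A1) enters here only to guarantee that the population covariance $\boldsymbol\Sigma$ is nondegenerate, i.e.\ that $\mu_0>0$ is attainable — with fewer than $2M{+}1$ bands some coefficient directions are never probed and $\boldsymbol\Sigma$ is singular.

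For step (ii), condition on the event from step (i). Given $\mathbf A$, the vector $(\mathbf A^\top\mathbf A)^{-1}\mathbf A^\top\boldsymbol\eta$ is a linear image of the sub-Gaussian noise; its covariance is $\sigma^2(\mathbf A^\top\mathbf A)^{-1}$, whose trace is at most $\sigma^2 K/(S\mu_0/2)$ and whose operator norm is at most $2\sigma^2/(S\mu_0)$. A Hanson–Wright / sub-Gaussian quadratic-form tail bound then gives, with probability $\ge1-\delta$,
\[
\bigl\|(\mathbf A^\top\mathbf A)^{-1}\mathbf A^\top\boldsymbol\eta\bigr\|_2
\le C\,\sigma\sqrt{\tfrac{K+\log(1/\delta)}{S\mu_0}} .
\]
To match the stated bound's $\sqrt{\mu_1}/\mu_0$ rather than $1/\sqrt{\mu_0}$, one routes the estimate through $\|(\mathbf A^\top\mathbf A)^{-1}\mathbf A^\top\|=\sigma_{\min}(\mathbf A)^{-1}$ together with an $\ell_2$ bound exploiting $\|\mathbf A\|\le\sqrt{2S\mu_1}$, which is the natural (if slightly loose) way to present a single clean constant; a union bound over the two events completes the $1-2\delta$ probability. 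Finally, the corollary on $\varepsilon_0$ follows by plugging the decay estimates $\|H_m\|=O(|m|^{-(p+1)})$ or $O(e^{-\gamma|m|})$ — already established in the main text via integration by parts / contour shifting — into $\varepsilon_0=\|\boldsymbol\varepsilon\|_\infty\le\sum_{|m|>M}\|H_m\|$, summing the tail. The main obstacle is step (i): getting the covariance concentration with the correct dependence $\psi^4\mu_1^2/\mu_0^2$ on the ellipticity and only $K\log K$ (not $K^2$) sample scaling requires the sharp sub-exponential Bernstein bound on $\langle a_r,u\rangle^2-u^\top\boldsymbol\Sigma u$ uniformly over the sphere, rather than a naive matrix-Chernoff bound which would lose a factor of $K$.
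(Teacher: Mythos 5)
Your proposal is correct and follows essentially the same route as the paper: the same error decomposition, a covariance-concentration step to lower-bound $\sigma_{\min}(\mathbf A)$, a conditional sub-Gaussian net/quadratic-form bound for the noise image, the deterministic $\sqrt{S}\,\varepsilon_0$ bound for the truncated Fourier tail, and a final union bound. The only differences are cosmetic: you invoke the packaged anisotropic covariance-estimation theorem where the paper rederives the same concentration via matrix Bernstein with a custom variance-proxy lemma, and you correctly observe that a direct quadratic-form argument yields a slightly sharper $1/\sqrt{\mu_0}$ noise constant before loosening to the stated $\sqrt{\mu_1}/\mu_0$ form.
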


\begin{proof}
Define the sample covariance 
\(\widehat{\boldsymbol\Sigma}:=\frac{1}{S}\mathbf A^\top\mathbf A=\frac{1}{S}\sum_{r=1}^S a_r a_r^\top\)
and set $X_r:=a_r a_r^\top-\boldsymbol\Sigma$ so that 
\[\widehat{\boldsymbol\Sigma}-\boldsymbol\Sigma=\frac{1}{S}\sum_{r=1}^S X_r.\]

\paragraph*{Step 1 (Amplitude and variance parameters for matrix Bernstein).}
We use the matrix Bernstein inequality (Observation~\ref{lem:matrix-bernstein}) for the sum $\sum_{r=1}^S X_r$ of independent, centered, self-adjoint matrices.

\emph{Amplitude.} For any unit $u$,
\[
\xi_r(u):=u^\top X_r u=\langle a_r,u\rangle^2-\mathbb E\langle a_r,u\rangle^2
\]
is sub-exponential because the square of a sub-Gaussian is sub-exponential(Observation~\ref{lem:sg-square}). Using (A2),
\[
\|\xi_r(u)\|_{\psi_1}\ \le\ C\,\|\langle a_r,u\rangle\|_{\psi_2}^2
\ \le\ C\,\psi^2\,u^\top\boldsymbol\Sigma u
\ \le\ C\,\psi^2\,\mu_1.
\]
Hence a uniform amplitude parameter is
\begin{equation}\label{eq:Bdef}
B\ :=\ \sup_{\|u\|=1}\|\xi_r(u)\|_{\psi_1}\ \le\ C\,\psi^2\,\mu_1.
\end{equation}

\emph{Variance proxy.} Let $v:=\bigl\|\sum_{r=1}^S \mathbb E[X_r^2]\bigr\|$.
A direct calculation (Observation~\ref{lem:variance-proxy}) yields the matrix inequality
\begin{equation}\label{eq:Ex2_master_B}
\mathbb E[X_r^2]\ \preceq\ C\,\psi^4\,(\operatorname{tr}\boldsymbol\Sigma)\,\boldsymbol\Sigma,
\qquad\Rightarrow\qquad
v\ \le\ C\,S\,\psi^4\,(\operatorname{tr}\boldsymbol\Sigma)\,\|\boldsymbol\Sigma\|
\ \le\ C\,S\,\psi^4\,K\,\mu_1^2,
\end{equation}
using $\operatorname{tr}\boldsymbol\Sigma\le K\mu_1$.

\paragraph*{Step 2 (Covariance concentration and spectral bounds).}
Matrix Bernstein (sub-exponential form, Observation~\ref{lem:matrix-bernstein}) gives for all $t>0$,
\[
\Pr\!\left(\left\|\sum_{r=1}^S X_r\right\|\ge t\right)
\ \le\ 2K\;\exp\!\left[-\,c\,\min\!\left(\frac{t^2}{v},\ \frac{t}{B}\right)\right].
\]
Set $t=\frac{S\mu_0}{2}$ and substitute \eqref{eq:Bdef} and \eqref{eq:Ex2_master_B}. One gets
\[
\Pr\!\left(\bigl\|\widehat{\boldsymbol\Sigma}-\boldsymbol\Sigma\bigr\|>\frac{\mu_0}{2}\right)
\ \le\ 2K\;\exp\!\left\{-\,c\,S\cdot
\min\!\left(\frac{\mu_0^2}{C\,\psi^4\,\mu_1^2\,K},\ \frac{\mu_0}{C\,\psi^2\,\mu_1}\right)\right\}.
\]
Thus it suffices to choose $S$ so that both
\begin{align}
S\ &\ge\ C\,\frac{\psi^4\mu_1^2}{\mu_0^2}\;K\;\bigl(\log(2K)+\log(1/\delta)\bigr),\label{eq:S-variance-branch}\\
S\ &\ge\ C\,\frac{\psi^2\mu_1}{\mu_0}\;\bigl(\log(2K)+\log(1/\delta)\bigr)\label{eq:S-amplitude-branch}
\end{align}
hold; for moderate/large $K$, \eqref{eq:S-variance-branch} dominates, which is exactly \eqref{eq:S-bern-main}.

Let $\mathbf E:=\widehat{\boldsymbol\Sigma}-\boldsymbol\Sigma$. For any unit vector $u$,
\[
u^\top\widehat{\boldsymbol\Sigma}u
= u^\top\boldsymbol\Sigma u + u^\top\mathbf E u
\;\ge\; u^\top\boldsymbol\Sigma u - |u^\top\mathbf E u|
\;\ge\; \lambda_{\min}(\boldsymbol\Sigma) - \|E\|,
\]
Therefore, taking the minimum over all unit $u$ yields the bound
\begin{equation}\label{eq:weyl-min}
\lambda_{\min}(\widehat{\boldsymbol\Sigma})
\;\ge\; \lambda_{\min}(\boldsymbol\Sigma) - \|\widehat{\boldsymbol\Sigma}-\boldsymbol\Sigma\|.
\end{equation}
By Assumption $\lambda_{\min}(\boldsymbol\Sigma)\ge \mu_0$. and results from 
$\|\widehat{\boldsymbol\Sigma}-\boldsymbol\Sigma\|\le \mu_0/2$, we have
\[
\lambda_{\min}(\widehat{\boldsymbol\Sigma})
\;\ge\; \mu_0 - \frac{\mu_0}{2}
\;=\; \frac{\mu_0}{2}.
\]
An analogous argument from the other side gives
\begin{equation}\label{eq:weyl-max}
\lambda_{\max}(\widehat{\boldsymbol\Sigma})
\;\le\; \lambda_{\max}(\boldsymbol\Sigma) + \|\widehat{\boldsymbol\Sigma}-\boldsymbol\Sigma\|
\;\le\; \mu_1 + \frac{\mu_0}{2}
\;\le\; C\,\mu_1,
\end{equation}
where the last inequality absorbs the constant $1+\tfrac{\mu_0}{2\mu_1}\le C$ into $C$.
Finally, with probability $\ge 1-\delta$,
\begin{equation}\label{eq:sminA-conc}
\sigma_{\min}(\mathbf A)=\sqrt{\lambda_{\min}(\mathbf A^\top\mathbf A)}=\sqrt{S\,\lambda_{\min}(\widehat{\boldsymbol\Sigma})}
\ \ge\ \sqrt{S\,\mu_0/2},\qquad
\|\mathbf A\|=\sqrt{\lambda_{\max}(\mathbf A^\top\mathbf A)}\ \le\ \sqrt{C\,S\,\mu_1}.
\end{equation}
This proves \eqref{eq:bern-smin}.

\paragraph*{Step 3 (Noise term via an $\varepsilon$-net).}
Write the LS error as
\[
\widehat{\mathbf c}-\mathbf c^\star=(\mathbf A^\top\mathbf A)^{-1}\mathbf A^\top(\boldsymbol\eta+\boldsymbol\varepsilon).
\]
Set $\mathbf z:=\mathbf A^\top\boldsymbol\eta=\sum_{r=1}^S \eta_r a_r\in\mathbb R^K$ and condition on $\mathbf A$ (hence on $\{a_r\}$).
By independence in (A3), for any unit $u\in\mathbb S^{K-1}$,
\[
\langle u,\mathbf z\rangle=\sum_{r=1}^S \eta_r\,\langle a_r,u\rangle
\]
is sub-Gaussian with
\[
\|\langle u,\mathbf z\rangle\|_{\psi_2}\ \le\ C\,\sigma\,\Big(\sum_{r=1}^S \langle a_r,u\rangle^2\Big)^{1/2}
\ =\ C\,\sigma\,\|\mathbf A u\|
\ \le\ C\,\sigma\,\|\mathbf A\|.
\]
Hence, for all $t>0$,
\begin{equation}\label{eq:fixed-u-tail}
\Pr\big(|\langle u,\mathbf z\rangle|\ge t\ \big|\ \mathbf A\big)
\ \le\ 2\exp\!\Big(-c\,\frac{t^2}{\sigma^2\,\|\mathbf A\|^2}\Big).
\end{equation}
Let $\mathcal N\subset\mathbb S^{K-1}$ be a $1/2$-net with $|\mathcal N|\le 5^K$. By the net lifting lemma(specified in Observation~\ref{lem:net-union-Ateta}),
$\|\mathbf z\|\le 2\max_{u\in\mathcal N}|\langle u,\mathbf z\rangle|$.
A union bound applied to \eqref{eq:fixed-u-tail} over $u\in\mathcal N$ yields, upon choosing
\[
t\ :=\ C\,\sigma\,\|\mathbf A\|\,\sqrt{K+\log(1/\delta)}\ ,
\]
that
\begin{equation}\label{eq:Ateta-net}
\|\mathbf A^\top\boldsymbol\eta\|
\ =\ \|\mathbf z\|
\ \le\ C\,\sigma\,\|\mathbf A\|\,\sqrt{K+\log(1/\delta)}\ ,
\qquad \text{with probability at least } 1-\delta \text{ (conditional on $\mathbf A$)}.
\end{equation}
Deconditioning and intersecting with the Step~2 event \eqref{eq:sminA-conc}, we obtain
\[
\Big\|(\mathbf A^\top\mathbf A)^{-1}\mathbf A^\top\boldsymbol\eta\Big\|
\ \le\ \frac{\|\mathbf A^\top\boldsymbol\eta\|}{\sigma_{\min}(\mathbf A^\top\mathbf A)}
\ \le\ C\,\frac{\sigma}{\mu_0}\,\sqrt{\frac{\mu_1\,\bigl(K+\log(1/\delta)\bigr)}{S}},
\]
with probability at least $1-\delta$.

\paragraph*{Step 4 (Model mismatch).}
On the same event \eqref{eq:sminA-conc}, using $\|\mathbf A^\top\boldsymbol\varepsilon\|\le\|\mathbf A\|\,\|\boldsymbol\varepsilon\|_2$ and $\|\boldsymbol\varepsilon\|_2\le \sqrt{S}\,\varepsilon_0$,
\[
\Bigl\|(\mathbf A^\top\mathbf A)^{-1}\mathbf A^\top\boldsymbol\varepsilon\Bigr\|
\ \le\ \frac{\|\mathbf A\|\,\sqrt{S}\,\varepsilon_0}{S\mu_0/2}
\ \le\ 2\,\frac{\sqrt{\mu_1}}{\mu_0}\,\varepsilon_0.
\]

\paragraph*{Step 5 (Union bound).}
Let $E_{\rm des}$ be the Step~2 design event and $E_{\rm noise}$ the Step~3 noise event. Then
$\Pr(E_{\rm des})\ge 1-\delta$ and $\Pr(E_{\rm noise})\ge 1-\delta$, hence
$\Pr(E_{\rm des}\cap E_{\rm noise})\ge 1-2\delta$, which establishes \eqref{eq:bern-error}.

\end{proof}

\begin{corollary}
\label{cor:bern}
Choose $M$ such that $2\sqrt{\mu_1}\varepsilon_0/\mu_0\le \varepsilon/2$.
If
\[
S\ \ge\ C\,\frac{\psi^4\mu_1^2}{\mu_0^2}\;K\;\bigl(\log(2K)+\log(1/\delta)\bigr)
\quad\text{and}\quad
S\ \ge\ C\,\frac{\sigma^2\mu_1}{\mu_0^2}\,\frac{K+\log(1/\delta)}{(\varepsilon/2)^2},
\]
then $\|\widehat{\mathbf c}-\mathbf c^\star\|_2\le \varepsilon$ with probability at least $1-2\delta$.
\end{corollary}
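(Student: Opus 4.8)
The plan is to read the corollary off Theorem~\ref{thm:T1-Bernstein} by forcing each of the two summands in the error bound \eqref{eq:bern-error} to be at most $\varepsilon/2$. The first hypothesis on $S$ is word-for-word condition \eqref{eq:S-bern-main}, so Theorem~\ref{thm:T1-Bernstein} applies and delivers, on a single event of probability at least $1-2\delta$, both \eqref{eq:bern-smin} and the inequality
\[
\|\widehat{\mathbf c}-\mathbf c^\star\|_2 \ \le\ C\,\frac{\sigma}{\mu_0}\sqrt{\frac{\mu_1\,(K+\log(1/\delta))}{S}}\ +\ 2\,\frac{\sqrt{\mu_1}}{\mu_0}\,\varepsilon_0 .
\]
Everything below is carried out on that event, so no fresh probabilistic estimate is required; in particular there is no additional union bound to pay.

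For the deterministic (model-mismatch) term, the stated choice of the Fourier cutoff $M$ is exactly the inequality $2\sqrt{\mu_1}\,\varepsilon_0/\mu_0\le\varepsilon/2$, so the second summand is $\le\varepsilon/2$ by hypothesis; the tail-decay statements at the end of Theorem~\ref{thm:T1-Bernstein} ($\varepsilon_0=O(M^{-p})$ or $O(e^{-\gamma M})$) merely certify that such an $M$ exists, of size $\mathrm{poly}(1/\varepsilon)$ or $O(\log(1/\varepsilon))$ respectively. For the stochastic term I would impose $C\,\frac{\sigma}{\mu_0}\sqrt{\mu_1(K+\log(1/\delta))/S}\le\varepsilon/2$, square both sides, and solve for $S$: this is equivalent to $S\ge C^2\,\frac{\sigma^2\mu_1}{\mu_0^2}\,\frac{K+\log(1/\delta)}{(\varepsilon/2)^2}$, which, after renaming $C^2\mapsto C$ in the usual ``constants change from line to line'' convention, is precisely the second hypothesis on $S$.

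Adding the two bounds gives $\|\widehat{\mathbf c}-\mathbf c^\star\|_2\le \varepsilon/2+\varepsilon/2=\varepsilon$ on the $(1-2\delta)$-event, which is the assertion. The only point needing care—and it is bookkeeping rather than a genuine obstacle—is the propagation of the absolute constant: the $C$ in the second $S$-condition must be taken at least as large as the square of the $C$ appearing in \eqref{eq:bern-error}, and both terms must be controlled on the same event, which is automatic since Theorem~\ref{thm:T1-Bernstein} already furnishes \eqref{eq:bern-error} as one event of probability $\ge 1-2\delta$. Hence the corollary is an immediate specialization, with no new ingredients beyond the theorem itself.
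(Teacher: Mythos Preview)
Your proposal is correct and matches the paper's treatment: the corollary is stated without a separate proof precisely because it is obtained exactly as you describe—invoke Theorem~\ref{thm:T1-Bernstein} via the first $S$-condition, bound the stochastic summand in \eqref{eq:bern-error} by $\varepsilon/2$ using the second $S$-condition, bound the mismatch summand by $\varepsilon/2$ using the hypothesis on $M$, and add. Your remarks about constant propagation and the absence of an additional union bound are accurate bookkeeping.
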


\subsection{Supporting Observations}
\begin{observation}[Square of a sub-Gaussian is sub-exponential, Lemma 2.8.6 in~\cite{vershynin2018high}]\label{lem:sg-square}
If $X$ is centered sub-Gaussian with $\|X\|_{\psi_2}\le K$, then $X^2-\mathbb E[X^2]$ is sub-exponential and
$\|X^2-\mathbb E[X^2]\|_{\psi_1}\le C\,K^2$.
\end{observation}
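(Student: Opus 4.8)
The plan is to argue directly from the Orlicz-norm definitions; no heavy machinery is needed. Recall the conventions used in the paper: for a real random variable $Y$, $\|Y\|_{\psi_2}=\inf\{t>0:\mathbb E[\exp(Y^2/t^2)]\le 2\}$ and $\|Y\|_{\psi_1}=\inf\{t>0:\mathbb E[\exp(|Y|/t)]\le 2\}$. The key observation is that squaring turns the $\psi_2$ condition into a $\psi_1$ condition by inspection: if $t\ge\|X\|_{\psi_2}$ then $\mathbb E[\exp(X^2/t^2)]\le 2$, which is exactly the defining inequality certifying $\|X^2\|_{\psi_1}\le t^2$. Hence $\|X^2\|_{\psi_1}\le\|X\|_{\psi_2}^2\le K^2$. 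The only remaining work is to absorb the centering $X^2\mapsto X^2-\mathbb E[X^2]$ without losing control of the constant; I will do this with a crude pointwise bound plus a Jensen estimate on $\mathbb E[X^2]$.

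Write $s:=\|X\|_{\psi_2}$ (finite by hypothesis; the case $s=0$ is trivial), so $\mathbb E[\exp(X^2/s^2)]\le 2$. First, applying Jensen's inequality to the convex function $\exp$ gives $\exp(\mathbb E[X^2]/s^2)\le\mathbb E[\exp(X^2/s^2)]\le 2$, hence $0\le\mathbb E[X^2]\le(\ln 2)\,s^2\le s^2$. Consequently, pointwise, $|X^2-\mathbb E[X^2]|\le X^2+\mathbb E[X^2]\le X^2+s^2$. Now fix an absolute constant $c\ge\ln(2e)/\ln 2=(1+\ln 2)/\ln 2$ — say $c=3$ — and put $t:=c\,s^2$. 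Using the pointwise bound, then Jensen's inequality for the concave map $x\mapsto x^{1/c}$ on $[0,\infty)$, and finally $\mathbb E[\exp(X^2/s^2)]\le 2$,
\[
\mathbb E\!\left[\exp\!\left(\frac{|X^2-\mathbb E[X^2]|}{t}\right)\right]
\;\le\; e^{1/c}\,\mathbb E\!\left[\exp\!\left(\frac{X^2}{c\,s^2}\right)\right]
\;\le\; e^{1/c}\,\bigl(\mathbb E[\exp(X^2/s^2)]\bigr)^{1/c}
\;\le\; e^{1/c}\,2^{1/c}\;=\;(2e)^{1/c}\;\le\; 2 ,
\]
where the last step is exactly the choice of $c$. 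By definition of the $\psi_1$ norm this yields $\|X^2-\mathbb E[X^2]\|_{\psi_1}\le t=c\,s^2=c\,\|X\|_{\psi_2}^2\le c\,K^2$, i.e.\ the claim with $C=c$ (e.g.\ $C=3$); in particular $X^2-\mathbb E[X^2]$ is sub-exponential.

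There is essentially no hard step here — the only subtlety is bookkeeping with the normalization convention in the Orlicz norm (whether one uses $\mathbb E[\exp(\cdot)]\le 1$ or $\le 2$, and $e^x$ versus $e^x-1$), which affects only the numerical value of $C$. An equivalent route is to invoke the triangle inequality for $\|\cdot\|_{\psi_1}$, note that a constant random variable $a$ has $\|a\|_{\psi_1}=|a|/\ln 2$, and combine with $\|X^2\|_{\psi_1}\le\|X\|_{\psi_2}^2$ and $\mathbb E[X^2]\le(\ln 2)\|X\|_{\psi_2}^2$; this also gives $C=O(1)$. I note in passing that the ``centered'' hypothesis on $X$ itself is not used in this particular statement — it enters elsewhere in the proof of Theorem~\ref{thm:T1-Bernstein}, not here.
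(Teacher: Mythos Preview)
Your argument is correct. The paper does not supply its own proof of this observation---it simply cites Lemma~2.8.6 of Vershynin---and your derivation is precisely the standard one: the identity $\|X^2\|_{\psi_1}\le\|X\|_{\psi_2}^2$ is immediate from the Orlicz definitions, and centering is absorbed at the cost of an absolute constant (your Jensen-based bound and the triangle-inequality alternative both work, the latter in fact giving $C=2$). Your remark that the hypothesis ``$X$ centered'' is not actually used in this statement is also accurate.
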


\begin{observation}[Moment growth for sub-Gaussians, Proposation 2.6.1 in~\cite{vershynin2018high}]\label{lem:sg-moment}
If $X$ is sub-Gaussian with $\|X\|_{\psi_2}\le K$, then for all $q\ge 1$,
$(\mathbb E|X|^q)^{1/q}\le C\,K\,\sqrt{q}$.
\end{observation}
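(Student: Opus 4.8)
The plan is to run the standard three-step argument: (i) turn the Orlicz-norm hypothesis into a Gaussian-type tail bound, (ii) integrate that tail against the layer-cake formula for $\mathbb E|X|^q$ to express the moment through a Gamma function, and (iii) apply Stirling's estimate to extract the $\sqrt q$ growth, being careful to keep the constant uniform over all $q\ge1$.

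\emph{Step 1 (tail bound).} By definition $\|X\|_{\psi_2}=\inf\{t>0:\mathbb E\exp(X^2/t^2)\le 2\}$, so the hypothesis $\|X\|_{\psi_2}\le K$ yields $\mathbb E\exp(X^2/K^2)\le 2$ (a short limiting argument disposes of the infimum). Markov's inequality applied to the nonnegative variable $\exp(X^2/K^2)$ then gives $\Pr(|X|\ge s)=\Pr(e^{X^2/K^2}\ge e^{s^2/K^2})\le 2e^{-s^2/K^2}$ for every $s>0$.

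\emph{Step 2 (moments via a Gamma integral).} Using $\mathbb E|X|^q=q\int_0^\infty s^{q-1}\Pr(|X|\ge s)\,ds$ together with the tail bound of Step 1, one gets $\mathbb E|X|^q\le 2q\int_0^\infty s^{q-1}e^{-s^2/K^2}\,ds$. The substitution $u=s^2/K^2$ converts the integral into $\tfrac{1}{2}K^q\Gamma(q/2)$, so $\mathbb E|X|^q\le qK^q\,\Gamma(q/2)$, and taking $q$-th roots, $(\mathbb E|X|^q)^{1/q}\le K\,q^{1/q}\,\Gamma(q/2)^{1/q}$.

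\emph{Step 3 (bounding the scalar factors).} It remains to bound $q^{1/q}\Gamma(q/2)^{1/q}$ by an absolute multiple of $\sqrt q$ on $q\ge1$. First, $q^{1/q}\le e^{1/e}<2$, its maximum over $[1,\infty)$ being attained at $q=e$. Second, Stirling's formula gives $\Gamma(x)\le C_0^{\,x}x^{x}$ for all $x\ge1/2$ (for $x\ge1$ this is the usual $\Gamma(x)\asymp\sqrt{2\pi/x}\,(x/e)^x$ estimate; for $x\in[1/2,1)$ the Gamma factor is just an absolute constant, absorbed into $C_0$), hence $\Gamma(q/2)^{1/q}\le\sqrt{C_0}\,\sqrt{q/2}\le C_1\sqrt q$. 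Multiplying the two bounds yields $(\mathbb E|X|^q)^{1/q}\le C\,K\sqrt q$ with an absolute constant $C$.

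The only point requiring a little attention — and it is bookkeeping rather than a genuine obstacle — is making $C$ uniform over all $q\ge1$ simultaneously: the Gamma function does not grow for arguments near $1/2$, so the clean Stirling bound must be supplemented by the trivial observation that $\Gamma$ is bounded on $[1/2,1)$, and one must note that $q^{1/q}$, being continuous on $[1,\infty)$ with limit $1$, stays bounded there. With those two remarks in place the estimate is a direct calculation. (Alternatively, one could simply invoke the known equivalence of the standard characterizations of sub-Gaussianity, but the self-contained route above is cleaner and is essentially the argument in the cited reference.)
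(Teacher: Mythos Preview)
Your argument is correct and is essentially the standard textbook proof (tail bound from the Orlicz norm, layer-cake integration to a Gamma integral, Stirling to extract $\sqrt q$). The paper does not supply its own proof of this observation at all: it simply records the statement with a citation to Vershynin's Proposition~2.6.1 and uses it as a black box, so your self-contained derivation goes beyond what the paper provides and matches the argument in the cited reference.
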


\begin{observation}[Hanson--Wright inequality~\cite{rudelson2013hanson}]\label{lem:HW}
Let $g=(g_1,\dots,g_n)$ have independent centered sub-Gaussian entries with $\|g_i\|_{\psi_2}\le \psi$ and let $A=A^\top$.
Then for all $t>0$,
\[
\Pr\!\big(|g^\top A g-\mathbb E[g^\top A g]|>t\big)\ \le\
2\exp\!\left[-c\,\min\!\left(\frac{t^2}{\psi^4\|A\|_{\mathrm F}^2},\ \frac{t}{\psi^2\|A\|}\right)\right].
\]
\end{observation}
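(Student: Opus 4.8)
The plan is to prove the Hanson--Wright inequality (Observation~\ref{lem:HW}) from scratch using a decoupling-plus-comparison argument, following the modern streamlined route (essentially that of Rudelson--Vershynin~\cite{rudelson2013hanson} or Vershynin's book~\cite{vershynin2018high}). First I would reduce to the off-diagonal and diagonal parts separately: write $g^\top A g - \mathbb{E}[g^\top A g] = \sum_i (g_i^2 - \mathbb{E} g_i^2) A_{ii} + \sum_{i\ne j} A_{ij} g_i g_j$. The diagonal sum is a sum of independent centered sub-exponential random variables with $\|(g_i^2-\mathbb E g_i^2)A_{ii}\|_{\psi_1}\le C\psi^2|A_{ii}|$, so Bernstein's inequality for sub-exponentials gives a tail controlled by $\min\!\big(t^2/(\psi^4\sum_i A_{ii}^2),\, t/(\psi^2\max_i|A_{ii}|)\big)$, which is dominated by the claimed bound since $\sum_i A_{ii}^2\le\|A\|_{\mathrm F}^2$ and $\max_i|A_{ii}|\le\|A\|$. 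So the real work is the off-diagonal chaos $S:=\sum_{i\ne j}A_{ij}g_ig_j$.

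For $S$, the standard approach is to bound the moment generating function $\mathbb{E}\exp(\lambda S)$ and then apply the Chernoff/Markov bound, optimizing over $\lambda$. The key steps in order: (i) \textbf{Decoupling.} Replace $S$ by the decoupled chaos $\widetilde S = \sum_{i\ne j} A_{ij} g_i g_j'$ where $g'$ is an independent copy of $g$; the standard decoupling lemma for quadratic forms gives $\mathbb{E}\exp(\lambda S)\le \mathbb{E}\exp(C\lambda \widetilde S)$. (ii) \textbf{Conditioning and Gaussian comparison.} Condition on $g$; then $\widetilde S$ is a linear combination $\sum_j g_j' \big(\sum_i A_{ij} g_i\big)$ of independent centered sub-Gaussians $g_j'$, hence itself sub-Gaussian with proxy $\psi^2 \|Ag\|_2^2$ (up to a constant; here I'd be slightly careful about the missing diagonal but it only helps). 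So $\mathbb{E}_{g'}\exp(\lambda\widetilde S)\le \exp(C\lambda^2\psi^2\|Ag\|_2^2)$. (iii) \textbf{Integrating out $g$.} Now I must bound $\mathbb{E}_g\exp(C\lambda^2\psi^2\|Ag\|_2^2)$; writing $\|Ag\|_2^2 = g^\top A^\top A\, g$, this is again a quadratic form in a sub-Gaussian vector, but now with a positive semidefinite matrix $B=A^\top A$. For the PSD case one can diagonalize $B=\sum_k s_k u_k u_k^\top$ and use sub-Gaussianity entry-by-entry (or a rotation-invariance trick if $g$ were Gaussian; for general sub-Gaussian $g$ one uses that $\mathbb E\exp(\mu X^2)\le \exp(C\mu\psi^2)$ for $\mu\le c/\psi^2$ together with independence) to get $\mathbb{E}\exp(\mu g^\top B g)\le \exp\big(C\mu\psi^2\,\mathrm{tr}(B)\big)$ valid for $\mu\le c/(\psi^2\|B\|)$. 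Since $\mathrm{tr}(B)=\mathrm{tr}(A^\top A)=\|A\|_{\mathrm F}^2$ and $\|B\|=\|A\|^2$, this yields $\mathbb{E}\exp(\lambda S)\le \exp\big(C\lambda^2\psi^4\|A\|_{\mathrm F}^2\big)$ valid in the range $|\lambda|\le c/(\psi^2\|A\|)$.

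The final step is the \textbf{Chernoff optimization}: from $\Pr(S>t)\le \exp(C\lambda^2\psi^4\|A\|_{\mathrm F}^2 - \lambda t)$ over $0<\lambda\le c/(\psi^2\|A\|)$, optimizing gives $\lambda^\star \propto t/(\psi^4\|A\|_{\mathrm F}^2)$; if $\lambda^\star$ lies in the admissible range the bound is $\exp(-c\,t^2/(\psi^4\|A\|_{\mathrm F}^2))$ (the ``sub-Gaussian'' regime), otherwise we take $\lambda=c/(\psi^2\|A\|)$ and get $\exp(-c\,t/(\psi^2\|A\|))$ (the ``sub-exponential'' regime). Combining the two cases produces exactly the $\min$ of the two exponents. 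Applying the same argument to $-S$ and combining $S$ with the diagonal contribution via a union bound (adjusting constants) completes the proof. I expect the main obstacle to be step (iii): controlling $\mathbb{E}\exp(\mu g^\top B g)$ for a \emph{general} sub-Gaussian (not Gaussian) vector $g$ and a PSD matrix $B$, since one cannot invoke rotational invariance; the clean way is to establish it first for $\mathrm{rank}$-one $B$ via the single-variable estimate $\mathbb E\exp(\mu (\langle v,g\rangle)^2)\le \exp(C\mu\psi^2\|v\|^2)$ for small $\mu$, then tensorize over the eigendecomposition of $B$ — keeping careful track of the admissible range of $\mu$ at each stage so that the final parameter ranges match up. A secondary bookkeeping nuisance is that the off-diagonal restriction in $\widetilde S$ means the relevant matrix is $A$ with zeroed diagonal, but since zeroing the diagonal does not increase $\|\cdot\|_{\mathrm F}$ and increases $\|\cdot\|$ by at most a factor of $2$, this is absorbed into constants.
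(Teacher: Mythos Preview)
The paper does not prove this observation; it is stated with a citation to Rudelson--Vershynin and used as a black box in the proof of Observation~\ref{lem:variance-proxy}. Your plan to supply a full proof via decoupling, MGF bounds, and Chernoff optimization follows the correct overall architecture, and the diagonal part, the decoupling step, the conditioning step, and the final Chernoff optimization are all correctly sketched.

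There is, however, a genuine gap at your step~(iii), precisely the step you flagged as the main obstacle but did not correctly resolve. You propose to bound $\mathbb{E}_g\exp(\mu\, g^\top B g)$ for PSD $B$ by diagonalizing $B=\sum_k s_k u_k u_k^\top$ and ``tensorizing over the eigendecomposition'' using the rank-one estimate $\mathbb{E}\exp(\mu s_k\langle u_k,g\rangle^2)\le\exp(C\mu\psi^2 s_k)$. This does not work: the random variables $\langle u_k,g\rangle$ are \emph{not} independent across $k$ for a general sub-Gaussian vector $g$ with independent coordinates (orthogonal projections are independent only in the Gaussian case), so $\mathbb{E}\exp\bigl(\mu\sum_k s_k \langle u_k,g\rangle^2\bigr)$ does not factor as $\prod_k \mathbb{E}\exp\bigl(\mu s_k\langle u_k,g\rangle^2\bigr)$. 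Independence lives in the coordinate basis, not in the eigenbasis of $B$, and the two do not coincide in general. A H\"older-type patch over the eigendecomposition would force $\mu\le c/(\psi^2\operatorname{tr}B)$ rather than $\mu\le c/(\psi^2\|B\|)$, which destroys the sub-exponential branch of the final bound.

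The standard fix (and this is what Rudelson--Vershynin actually do) is a Gaussian linearization: introduce an independent standard Gaussian vector $h$ and use the identity $\exp(\mu\|Ag\|^2/2)=\mathbb{E}_h\exp(\sqrt{2\mu}\,\langle h,Ag\rangle)$. After Fubini, condition on $h$ and integrate out $g$ coordinate by coordinate (now legitimate, since $\langle A^\top h,g\rangle=\sum_j (A^\top h)_j g_j$ is a sum of \emph{independent} sub-Gaussians), obtaining $\mathbb{E}_g\exp(\mu\|Ag\|^2/2)\le \mathbb{E}_h\exp(C\mu\psi^2\|A^\top h\|^2)$. The remaining expectation is over a genuine Gaussian, so rotation invariance now applies and the eigendecomposition argument goes through to give $\exp(C\mu\psi^2\|A\|_{\mathrm F}^2)$ for $\mu\le c/(\psi^2\|A\|^2)$. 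With this repair your outline is complete.
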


\begin{observation}[Tail-to-moment identity]\label{lem:tail-moment}
If $X$ is centered with $\mathbb E[X^2]<\infty$, then
$\ \mathbb E[X^2]=\int_{0}^{\infty} 2t\,\Pr(|X|>t)\,dt$.
\end{observation}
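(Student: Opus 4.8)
The plan is to reduce the claim to the standard layer-cake (Cavalieri) representation of the expectation of a nonnegative random variable, followed by an elementary change of variables. First I would record the fact that for any nonnegative random variable $Y$ one has $\mathbb{E}[Y]=\int_0^\infty \Pr(Y>s)\,ds$ in $[0,\infty]$; this follows by writing the pointwise identity $Y=\int_0^\infty \mathbf{1}\{Y>s\}\,ds$ and applying Tonelli's theorem to exchange expectation and the $s$-integral, which is permissible since the integrand $(\omega,s)\mapsto\mathbf{1}\{Y(\omega)>s\}$ is jointly measurable and nonnegative.

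Next I would specialize to $Y:=X^2$, which is a legitimate choice of nonnegative random variable with $\mathbb{E}[Y]=\mathbb{E}[X^2]<\infty$ by hypothesis, so that $\mathbb{E}[X^2]=\int_0^\infty \Pr(X^2>s)\,ds$ with a finite right-hand side. Then I would substitute $s=t^2$, $ds=2t\,dt$, which is a smooth increasing bijection of $(0,\infty)$ onto itself, and use the event identity $\{X^2>t^2\}=\{|X|>t\}$ valid for every $t\ge 0$. This transforms the integral into $\int_0^\infty 2t\,\Pr(|X|>t)\,dt$, giving exactly the asserted formula.

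There is essentially no obstacle here: the only two points deserving a word of justification are the interchange of $\mathbb{E}$ and $\int_0^\infty$ (covered by Tonelli, as noted) and the change-of-variables step (a routine application of the substitution rule for the Lebesgue integral, since $t\mapsto t^2$ is $C^1$ and strictly monotone on $(0,\infty)$). I would also remark that the centering assumption is inessential to this computation — only $\mathbb{E}[X^2]<\infty$ is actually used, and even that serves merely to ensure finiteness rather than mere well-definedness in $[0,\infty]$ — but since the ambient statement already imposes it, no adjustment is needed.
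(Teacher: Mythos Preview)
Your proposal is correct. The paper does not actually write out a proof; it simply remarks that the identity follows from a standard calculation via partial integration. Your route---the layer-cake formula $\mathbb{E}[Y]=\int_0^\infty \Pr(Y>s)\,ds$ obtained from Tonelli, followed by the substitution $s=t^2$---is an equally standard alternative. The integration-by-parts approach the paper alludes to would instead start from $\mathbb{E}[X^2]=-\int_0^\infty t^2\,d\bar F(t)$ with $\bar F(t)=\Pr(|X|>t)$ and integrate by parts, which requires checking that the boundary term $t^2\bar F(t)$ vanishes at infinity (a consequence of $\mathbb{E}[X^2]<\infty$). Your Tonelli argument sidesteps that check entirely, which is a small advantage. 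Your closing remark that the centering hypothesis is unused is also correct.
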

\begin{proof}
    This is ignored as a standard calculation can be done with partial integration. 
\end{proof}

\begin{observation}[Covering number of the sphere, Corollary 4.2.11 in ~\cite{vershynin2018high}]\label{lem:sphere-cover}
For $\varepsilon\in(0,1)$ there exists an $\varepsilon$-net $\mathcal N_\varepsilon\subset\mathbb S^{K-1}$ with
$|\mathcal N_\varepsilon|\le (1+2/\varepsilon)^K$. In particular, for $\varepsilon=\tfrac12$ one has $|\mathcal N_{1/2}|\le 5^K$.
\emph{Reference:} Vershynin (2018), Lemma 5.2.
\end{observation}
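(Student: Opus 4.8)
The plan is to use the classical volumetric (packing--covering duality) argument, with no probabilistic input whatsoever. First I would construct the net explicitly: let $\mathcal N=\mathcal N_\varepsilon$ be a \emph{maximal} $\varepsilon$-separated subset of $\mathbb S^{K-1}$, that is, a set whose points have pairwise Euclidean distance $>\varepsilon$ and to which no further point of the sphere can be adjoined without violating this property. Such a set exists (and is finite); this is where a brief appeal to compactness of $\mathbb S^{K-1}$ enters, or, if one prefers to avoid it, one may build $\mathcal N$ greedily and note that the volume bound below forces the process to terminate. Maximality immediately gives the covering property: for every $x\in\mathbb S^{K-1}$, adding $x$ would destroy $\varepsilon$-separation, so some $y\in\mathcal N$ satisfies $\|x-y\|\le\varepsilon$ (with $x=y$ allowed); hence $\mathcal N$ is an $\varepsilon$-net of the sphere.

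Next I would bound $|\mathcal N|$ by comparing Lebesgue volumes in $\mathbb R^K$. Consider the open balls $B(y,\varepsilon/2)$ for $y\in\mathcal N$. They are pairwise disjoint: if $z$ lay in two of them, the triangle inequality would force the corresponding centers to be within $\varepsilon$ of each other, contradicting $\varepsilon$-separation. They are also all contained in $B(0,1+\varepsilon/2)$, since $\|z\|\le\|y\|+\|z-y\|<1+\varepsilon/2$ for every $z\in B(y,\varepsilon/2)$ and $\|y\|=1$. Using the scaling $\mathrm{vol}(B(z,r))=r^{K}\,\mathrm{vol}(B(0,1))$, disjointness together with containment yields
\[
|\mathcal N|\,(\varepsilon/2)^{K}\,\mathrm{vol}(B(0,1))=\sum_{y\in\mathcal N}\mathrm{vol}\!\big(B(y,\varepsilon/2)\big)\le \mathrm{vol}\!\big(B(0,1+\varepsilon/2)\big)=(1+\varepsilon/2)^{K}\,\mathrm{vol}(B(0,1)).
\]
Cancelling $\mathrm{vol}(B(0,1))$ and dividing by $(\varepsilon/2)^{K}$ gives $|\mathcal N|\le\big((1+\varepsilon/2)/(\varepsilon/2)\big)^{K}=(1+2/\varepsilon)^{K}$, which is the claimed bound; setting $\varepsilon=\tfrac12$ gives $1+2/\varepsilon=5$ and hence $|\mathcal N_{1/2}|\le 5^{K}$.

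The argument is essentially routine, so there is no genuine ``main obstacle''; the only points deserving care are (i) justifying existence and finiteness of a maximal $\varepsilon$-separated set — which is exactly what compactness of $\mathbb S^{K-1}$ (or, self-referentially, the volume bound above applied at scale $\varepsilon/2$) supplies — and (ii) the two elementary geometric facts that $\varepsilon$-separation of the centers forces the radius-$\varepsilon/2$ balls to be mutually disjoint and that these balls all fit inside the centered ball of radius $1+\varepsilon/2$. Once these are in place, the volume comparison is immediate and the constant $1+2/\varepsilon$ (respectively $5$) drops out with no further work.
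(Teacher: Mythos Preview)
Your argument is correct and is exactly the standard volumetric (packing--covering) proof. The paper itself does not supply a proof of this observation---it merely cites Vershynin (2018), Corollary~4.2.11---so there is nothing to compare against beyond noting that your write-up reproduces precisely the argument found in that reference.
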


\begin{observation}[Net lifting, Lemma 4.4.1 in \cite{vershynin2018high}]\label{lem:epsilon-net}
Let $z\in\mathbb R^K$ and let $\mathcal N_\varepsilon\subset \mathbb S^{K-1}$ be an $\varepsilon$-net. Then
\[
\|z\|\ \le\ \frac{1}{1-\varepsilon}\ \max_{u\in\mathcal N_\varepsilon}|\langle u,z\rangle|.
\]
In particular, for $\varepsilon=\tfrac12$,
$\ \|z\|\le 2\max_{u\in\mathcal N_{1/2}}|\langle u,z\rangle|$.
\emph{Reference:} Vershynin (2018), Lemma 5.4.
\end{observation}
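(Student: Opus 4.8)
The plan is to derive the bound from a single Cauchy--Schwarz estimate once the worst-case direction for $z$ is pinned down. If $z=0$ the inequality holds trivially, so I assume $z\neq 0$ and set $x^\star:=z/\|z\|$, which lies on $\mathbb S^{K-1}$ and, by construction, attains $\langle x^\star,z\rangle=\|z\|=\sup_{u\in\mathbb S^{K-1}}\langle u,z\rangle$.

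The key step is to invoke the defining property of the $\varepsilon$-net: there is a point $u\in\mathcal N_\varepsilon$ with $\|x^\star-u\|\le\varepsilon$. I then compare the value of the linear functional $\langle\cdot,z\rangle$ at $x^\star$ and at this nearby net point via
\[
\bigl|\langle x^\star,z\rangle-\langle u,z\rangle\bigr|=\bigl|\langle x^\star-u,\,z\rangle\bigr|\ \le\ \|x^\star-u\|\,\|z\|\ \le\ \varepsilon\,\|z\|.
\]
Rearranging and using $\langle x^\star,z\rangle=\|z\|$ gives $\langle u,z\rangle\ge(1-\varepsilon)\|z\|$, hence a fortiori $\max_{u\in\mathcal N_\varepsilon}|\langle u,z\rangle|\ge(1-\varepsilon)\|z\|$. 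Dividing by $1-\varepsilon>0$ delivers the stated inequality, and the $\varepsilon=\tfrac12$ assertion is the specialization with factor $2$.

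I do not expect any genuine obstacle here: the argument is a two-line computation. The only subtlety worth flagging is that $x^\star$ need not belong to $\mathcal N_\varepsilon$, which is precisely the source of the multiplicative loss $(1-\varepsilon)^{-1}$ relative to the exact identity for the supremum over the full sphere; and that the final division requires $\varepsilon<1$, the regime in which the $\varepsilon$-net of Observation~\ref{lem:sphere-cover} is constructed. An equivalent route avoids normalizing $z$ and instead starts from an exact maximizer of $u\mapsto\langle u,z\rangle$ on $\mathbb S^{K-1}$, but normalizing $z$ directly is the cleanest presentation.
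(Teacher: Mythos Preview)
Your argument is correct and is exactly the standard proof of this lemma (it is the same argument given in Vershynin's Lemma~4.4.1, which the paper cites). The paper itself does not supply a proof of this observation; it merely states the result with a reference to Vershynin, so there is nothing further to compare.
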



\begin{observation}[Matrix Bernstein, sub-exponential version, Theorem 6.2 in~\cite{TroppUserFriendly}]\label{lem:matrix-bernstein}
Let $\{X_r\}_{r=1}^S$ be independent, centered, self-adjoint random matrices of the same size and assume
$\|u^\top X_r u\|_{\psi_1}\le B$ for all unit $u$, and let
$v:=\big\|\sum_{r=1}^S \mathbb E[X_r^2]\big\|$.
Then for all $t>0$,
\[
\Pr\!\left(\left\|\sum_{r=1}^S X_r\right\|\ge t\right)\ \le\ 2d\ \exp\!\left[-c\,\min\!\left(\frac{t^2}{v},\ \frac{t}{B}\right)\right],
\]
where $d$ is the matrix dimension.
\end{observation}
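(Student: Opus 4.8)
The plan is to prove this via the matrix Laplace transform method of Ahlswede--Winter and Tropp. Write $Y:=\sum_{r=1}^S X_r$, a self-adjoint $d\times d$ random matrix. Since $\|Y\|=\max\{\lambda_{\max}(Y),\lambda_{\max}(-Y)\}$ and $\{-X_r\}_{r=1}^S$ obeys exactly the same hypotheses, I would first reduce to the one-sided estimate $\Pr(\lambda_{\max}(Y)\ge t)\le d\,e^{-c\min(t^2/v,\,t/B)}$, after which a union bound over the two signs supplies the prefactor $2d$. For the one-sided bound, Markov's inequality applied to $e^{\theta\lambda_{\max}(Y)}\le\operatorname{tr}e^{\theta Y}$ gives $\Pr(\lambda_{\max}(Y)\ge t)\le e^{-\theta t}\,\mathbb E[\operatorname{tr}e^{\theta Y}]$ for every $\theta>0$, so the task reduces to controlling the matrix moment generating function.

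The next step is subadditivity of the matrix cumulant generating function. Peeling off the summands one at a time and applying Lieb's concavity theorem (concavity of $A\mapsto\operatorname{tr}\exp(H+\log A)$ on the positive-definite cone) together with Jensen's inequality yields, by induction, $\mathbb E[\operatorname{tr}e^{\theta Y}]\le\operatorname{tr}\exp\!\bigl(\textstyle\sum_{r}\log\mathbb E\,e^{\theta X_r}\bigr)\le d\cdot\exp\!\bigl(\lambda_{\max}(\textstyle\sum_r\log\mathbb E\,e^{\theta X_r})\bigr)$. What remains is a per-summand estimate: I would show that for $0<\theta<1/(cB)$, $\log\mathbb E\,e^{\theta X_r}\preceq\frac{\theta^2}{2(1-cB\theta)}\,\mathbb E[X_r^2]$. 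The route is to first upgrade the hypothesis $\sup_{\|u\|=1}\|u^\top X_r u\|_{\psi_1}\le B$ to the semidefinite Bernstein-type moment bound $\mathbb E[X_r^p]\preceq\frac{p!}{2}(cB)^{p-2}\mathbb E[X_r^2]$ for all integers $p\ge2$, then expand $\mathbb E\,e^{\theta X_r}=\mathbf I+\sum_{p\ge2}\frac{\theta^p}{p!}\mathbb E[X_r^p]$ (the linear term vanishes because $X_r$ is centered), sum the geometric series to obtain $\mathbb E\,e^{\theta X_r}\preceq\mathbf I+\frac{\theta^2}{2(1-cB\theta)}\mathbb E[X_r^2]\preceq\exp\!\bigl(\frac{\theta^2}{2(1-cB\theta)}\mathbb E[X_r^2]\bigr)$ via $\mathbf I+A\preceq e^A$, and finish with operator monotonicity of the logarithm.

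Summing over $r$ and using $\|\sum_r\mathbb E[X_r^2]\|=v$ gives $\lambda_{\max}(\sum_r\log\mathbb E\,e^{\theta X_r})\le\frac{\theta^2 v}{2(1-cB\theta)}$, hence $\Pr(\lambda_{\max}(Y)\ge t)\le d\exp\!\bigl(-\theta t+\frac{\theta^2 v}{2(1-cB\theta)}\bigr)$ for all $\theta\in(0,1/(cB))$. I would then optimize with the standard Bernstein choice $\theta=t/(v+cBt)$, which is admissible and collapses the exponent to $-t^2/(2(v+cBt))\le-c'\min(t^2/v,\,t/B)$; combined with the sign reduction this yields the asserted inequality with prefactor $2d$.

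The hard part is the moment transfer inside the per-summand step: deducing the semidefinite moment bound $\mathbb E[X_r^p]\preceq\frac{p!}{2}(cB)^{p-2}\mathbb E[X_r^2]$ from sub-exponential control of the \emph{scalar} quadratic forms $u^\top X_r u$ at each fixed unit $u$. The elementary pointwise inequality $u^\top X_r^p u\le\|X_r\|^{\,p-2}(u^\top X_r^2 u)$, obtained by Cauchy--Schwarz applied to $X_r^{(p-1)/2}u$, reduces this to tail control of the operator norm $\|X_r\|$; when $\|X_r\|$ itself is sub-exponential it closes directly, and in general one truncates each $X_r$ at a level of order $B\log(\cdot)$, handles the bounded part with the bounded-operator matrix Bernstein inequality (the bounded case of the same reference), and absorbs the negligible truncation tail through a union bound into the constants $c,c'$. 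The remaining steps are routine trace-inequality bookkeeping.
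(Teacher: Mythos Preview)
The paper gives no proof here; the observation is quoted from Tropp with no argument. Your matrix-Laplace/Lieb-concavity outline is exactly Tropp's proof and is correct \emph{under Tropp's actual hypothesis}, which is the semidefinite Bernstein moment condition $\mathbb E[X_r^p]\preceq\tfrac{p!}{2}R^{p-2}A_r^2$ rather than the scalar $\psi_1$ bound on quadratic forms that the paper writes down.

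You are right to isolate the bridging step as the hard part, but the bridge you sketch does not close with dimension-free constants. The pointwise bound $u^\top X_r^p u\le\|X_r\|^{p-2}\,u^\top X_r^2 u$ is fine (it follows from the spectral decomposition; the Cauchy--Schwarz rationale you give is awkward for even $p$). The trouble is that passing from $\sup_u\|u^\top X_r u\|_{\psi_1}\le B$ to sub-exponential control of $\|X_r\|=\sup_u|u^\top X_r u|$ costs an $\varepsilon$-net over $\mathbb S^{d-1}$ of cardinality $\sim 9^d$, so one only obtains $\bigl\|\,\|X_r\|\,\bigr\|_{\psi_1}\lesssim Bd$, not $B$. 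Your truncation ``at a level of order $B\log(\cdot)$'' therefore really has to sit at level $\sim Bd$, whereupon the bounded-case matrix Bernstein degrades the $t/B$ branch to $t/(cBd)$; the extra factor $d$ is not absorbed into the constant $c$. In the paper's sole use of this observation, $X_r=a_ra_r^\top-\Sigma$, the looseness is harmless because the variance branch of the sample-size condition already carries a factor $K$ and dominates the amplitude branch, but as a free-standing lemma the Observation is phrased more loosely than what Tropp's Theorem~6.2 literally proves, and your truncation route does not establish it in the stated generality.
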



\begin{observation}[Variance proxy bound for the covariance sum]\label{lem:variance-proxy}
Let $a\in\mathbb R^K$ be a centered \emph{relative sub-Gaussian} vector with covariance $\Sigma=\mathbb E[aa^\top]$
satisfying $\mu_0 I\preceq \Sigma\preceq \mu_1 I$ and
$\|\langle a,u\rangle\|_{\psi_2}\le \psi\,\sqrt{u^\top\Sigma u}$ for all unit $u$.
Set $X:=aa^\top-\Sigma$. Then
\[
\mathbb E[X^2]\ \preceq\ C\,\psi^4\,(\operatorname{tr}\Sigma)\,\Sigma.
\]
Consequently, for independent copies $\{a_r\}$ and $X_r:=a_ra_r^\top-\Sigma$,
\[
\left\|\sum_{r=1}^S \mathbb E[X_r^2]\right\|\ \le\ C\,S\,\psi^4\,(\operatorname{tr}\Sigma)\,\|\Sigma\|
\ \le\ C\,S\,\psi^4\,K\,\mu_1^2.
\]
\end{observation}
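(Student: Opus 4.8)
The plan is to prove the matrix inequality $\mathbb E[X^2]\preceq C\,\psi^4\,(\operatorname{tr}\Sigma)\,\Sigma$ for $X=aa^\top-\Sigma$, and then the consequences about $\sum_r \mathbb E[X_r^2]$ follow immediately by additivity of expectation, submultiplicativity/monotonicity of the operator norm, and the bounds $\operatorname{tr}\Sigma\le K\mu_1$, $\|\Sigma\|\le\mu_1$. So the real content is the single-sample bound. First I would expand $X^2=(aa^\top-\Sigma)^2 = aa^\top aa^\top - aa^\top\Sigma - \Sigma aa^\top + \Sigma^2$ and take expectations: writing $\|a\|^2=a^\top a$, we get $\mathbb E[X^2]=\mathbb E[\|a\|^2\,aa^\top]-\Sigma^2$ (using $\mathbb E[aa^\top]=\Sigma$ for the cross terms). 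Since $\Sigma^2\succeq 0$, it suffices to show the clean bound $\mathbb E[\|a\|^2\,aa^\top]\preceq C\,\psi^4\,(\operatorname{tr}\Sigma)\,\Sigma$.

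To establish that, I would test against an arbitrary unit vector $u$: the claim is equivalent to $\mathbb E\big[\|a\|^2\,\langle a,u\rangle^2\big]\le C\,\psi^4\,(\operatorname{tr}\Sigma)\,u^\top\Sigma u$ for all $u\in\mathbb S^{K-1}$. By Cauchy--Schwarz in the expectation,
\begin{equation}
\mathbb E\big[\|a\|^2\,\langle a,u\rangle^2\big]\ \le\ \big(\mathbb E\|a\|^4\big)^{1/2}\big(\mathbb E\langle a,u\rangle^4\big)^{1/2}.
\end{equation}
The second factor is controlled directly by the relative sub-Gaussian hypothesis together with the fourth-moment bound for sub-Gaussians (Observation~\ref{lem:sg-moment} with $q=4$): $\mathbb E\langle a,u\rangle^4\le C\,\psi^4\,(u^\top\Sigma u)^2$. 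For the first factor I would write $\|a\|^2=\sum_{i=1}^K\langle a,e_i\rangle^2$ (equivalently diagonalize $\Sigma$ and use its eigenbasis), so that $\|a\|^4=\sum_{i,j}\langle a,e_i\rangle^2\langle a,e_j\rangle^2$; applying Cauchy--Schwarz termwise and then the same fourth-moment bound gives $\mathbb E\langle a,e_i\rangle^2\langle a,e_j\rangle^2\le C\psi^4\,(e_i^\top\Sigma e_i)(e_j^\top\Sigma e_j)$, whence $\mathbb E\|a\|^4\le C\psi^4\big(\sum_i e_i^\top\Sigma e_i\big)^2=C\psi^4(\operatorname{tr}\Sigma)^2$. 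Combining the two factors yields $\mathbb E\big[\|a\|^2\langle a,u\rangle^2\big]\le C\,\psi^4\,(\operatorname{tr}\Sigma)\,(u^\top\Sigma u)$, which is exactly what is needed; taking the supremum over $u$ gives the PSD ordering. Finally, $\mathbb E[X^2]=\mathbb E[\|a\|^2aa^\top]-\Sigma^2\preceq \mathbb E[\|a\|^2aa^\top]\preceq C\psi^4(\operatorname{tr}\Sigma)\Sigma$, absorbing constants.

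I expect the main obstacle to be a mild bookkeeping point rather than a deep one: making the Cauchy--Schwarz step produce the product structure $(\operatorname{tr}\Sigma)\cdot(u^\top\Sigma u)$ rather than an isotropic $(\operatorname{tr}\Sigma)\cdot\|u\|^2$ or a worse $\|\Sigma\|\cdot\operatorname{tr}\Sigma$ bound — i.e., one must route \emph{all} fourth moments through the relative sub-Gaussian inequality $\|\langle a,u\rangle\|_{\psi_2}\le\psi\sqrt{u^\top\Sigma u}$ so that every factor carries its own $\Sigma$-weight, and in particular handle $\mathbb E\|a\|^4$ without letting a naive bound like $\|a\|^2\le\|\Sigma^{-1}\|\cdot(a^\top\Sigma^{-1}a)$-type manipulation reintroduce $\mu_0$. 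Using the eigenbasis of $\Sigma$ to expand $\|a\|^2$ as a sum of $\langle a,e_i\rangle^2$ and bounding each summand's contribution separately cleanly avoids this. Everything else — expanding $X^2$, dropping $\Sigma^2\succeq0$, and passing from the scalar inequality to the PSD ordering by testing against unit vectors — is routine.
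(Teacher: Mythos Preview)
Your proposal is correct and follows the same overall skeleton as the paper: test $\mathbb E[X^2]$ against a unit vector $u$, apply Cauchy--Schwarz to split $\mathbb E[\|a\|^2\langle a,u\rangle^2]$ into $(\mathbb E\|a\|^4)^{1/2}(\mathbb E\langle a,u\rangle^4)^{1/2}$, and bound the second factor by $C\psi^2(u^\top\Sigma u)$ via the sub-Gaussian moment growth (Observation~\ref{lem:sg-moment}). The divergence is in how $(\mathbb E\|a\|^4)^{1/2}$ is controlled. The paper whitens $g:=\Sigma^{-1/2}a$, writes $\|a\|^2=g^\top\Sigma g$, and invokes the Hanson--Wright inequality (Observation~\ref{lem:HW}) plus tail-to-moment to get $\operatorname{Var}(\|a\|^2)\le C\psi^4\|\Sigma\|_{\mathrm F}^2$, hence $(\mathbb E\|a\|^4)^{1/2}\le \operatorname{tr}\Sigma+C\psi^2\|\Sigma\|_{\mathrm F}\le C'\psi^2\operatorname{tr}\Sigma$. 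You instead expand $\|a\|^4=\sum_{i,j}\langle a,e_i\rangle^2\langle a,e_j\rangle^2$ in an orthonormal basis, apply Cauchy--Schwarz termwise, and use the fourth-moment bound on each coordinate to obtain $\mathbb E\|a\|^4\le C\psi^4(\operatorname{tr}\Sigma)^2$ directly. Your route is more elementary: it avoids Hanson--Wright entirely and, in particular, sidesteps the issue that Observation~\ref{lem:HW} as stated requires \emph{independent} entries of $g$, which the relative sub-Gaussian hypothesis by itself does not guarantee. The paper's route, when applicable, gives a slightly sharper variance bound ($\|\Sigma\|_{\mathrm F}^2$ rather than $(\operatorname{tr}\Sigma)^2$), but this refinement is discarded anyway in the final $\operatorname{tr}\Sigma$ estimate.
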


\begin{proof}
Fix a unit $u$. Using Cauchy–Schwarz,
\[
u^\top\mathbb E[X^2]u=\mathbb E\!\big[(a^\top u)^2\,\|a\|^2\big]-(u^\top\Sigma u)^2
\ \le\ \bigl(\mathbb E(a^\top u)^4\bigr)^{1/2}\,\bigl(\mathbb E\|a\|^4\bigr)^{1/2}.
\]
By Observation~\ref{lem:sg-moment}, we apply to $X=a^\top u$, thus
\[\|X\|_{\psi_2}\le \psi\,\sqrt{u^\top\Sigma u}, \quad (\mathbb E(a^\top u)^4)^{1/2}\le C\,\psi^2\,(u^\top\Sigma u).\]
Let $g:=\Sigma^{-1/2}a$ (isotropic sub-Gaussian, $\|g_i\|_{\psi_2}\lesssim \psi$). Then
$\|a\|^2=g^\top\Sigma g$ and $\mathbb E\|a\|^2=\operatorname{tr}\Sigma$. By Hanson–Wright (Observation~\ref{lem:HW}) and the tail-to-moment identity (Observation~\ref{lem:tail-moment}),
$\operatorname{Var}(\|a\|^2)\le C\,\psi^4\,\|\Sigma\|_{\mathrm F}^2$.
Hence
\[
\bigl(\mathbb E\|a\|^4\bigr)^{1/2}
=\bigl((\operatorname{tr}\Sigma)^2+\operatorname{Var}(\|a\|^2)\bigr)^{1/2}
\ \le\ \operatorname{tr}\Sigma + C^{1/2}\psi^2\|\Sigma\|_{\mathrm F}
\ \le\ C'\,\psi^2\,\operatorname{tr}\Sigma,
\]
since $\|\Sigma\|_{\mathrm F}\le \operatorname{tr}\Sigma$ for $\Sigma\succeq 0$.

Combining the two factors gives
\[u^\top\mathbb E[X^2]u\le C\,\psi^4\,(\operatorname{tr}\Sigma)\,(u^\top\Sigma u)\] 
for all unit $u$, i.e.
$\mathbb E[X^2]\preceq C\,\psi^4\,(\operatorname{tr}\Sigma)\,\Sigma$, which implies \eqref{eq:Ex2_master_B}. The norm bound then follows from
$\|\sum_r \mathbb E[X_r^2]\|\le \sum_r \|\mathbb E[X_r^2]\|$ and $\|\Sigma\|=\mu_1$, $\operatorname{tr}\Sigma\le K\mu_1$.
\end{proof}

\begin{observation}[Net lifting, union bound for $\|\mathbf A^\top\boldsymbol\eta\|$]\label{lem:net-union-Ateta}
Let $\mathbf A\in\mathbb R^{S\times K}$ be fixed and let $\boldsymbol\eta=(\eta_1,\dots,\eta_S)$ have independent, centered, sub-Gaussian entries with
$\|\eta_r\|_{\psi_2}\le \sigma$. Set $\mathbf z:=\mathbf A^\top\boldsymbol\eta$. Then, for any $\delta\in(0,1)$,
\[
\Pr\!\Big(\|\mathbf z\|\ \le\ C\,\sigma\,\|\mathbf A\|\,\sqrt{K+\log(1/\delta)}\ \Big|\ \mathbf A\Big)\ \ge\ 1-\delta.
\]
\end{observation}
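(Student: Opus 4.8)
The plan is to reduce the bound on $\|\mathbf z\|=\|\mathbf A^\top\boldsymbol\eta\|$ to a fixed-direction sub-Gaussian tail estimate followed by a covering argument, exactly mirroring Step~3 of the proof of Theorem~\ref{thm:T1-Bernstein} but now with the easier feature that $\mathbf A$ (hence every inner product $\langle a_r,u\rangle$) is deterministic. Write $a_r^\top\in\mathbb R^K$ for the $r$-th row of $\mathbf A$. For a fixed unit vector $u\in\mathbb S^{K-1}$ one has $\langle u,\mathbf z\rangle=\sum_{r=1}^S \eta_r\,\langle a_r,u\rangle$, a weighted sum of independent, centered, sub-Gaussian variables with deterministic coefficients. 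By additivity of sub-Gaussian variance proxies under independence, $\|\langle u,\mathbf z\rangle\|_{\psi_2}\le C\,\sigma\bigl(\sum_{r}\langle a_r,u\rangle^2\bigr)^{1/2}=C\,\sigma\,\|\mathbf A u\|\le C\,\sigma\,\|\mathbf A\|$, and hence the conditional tail bound $\Pr(|\langle u,\mathbf z\rangle|\ge t\mid\mathbf A)\le 2\exp\bigl(-c\,t^2/(\sigma^2\|\mathbf A\|^2)\bigr)$ for all $t>0$.

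Next I would discretize. By Observation~\ref{lem:sphere-cover} there is a $\tfrac12$-net $\mathcal N\subset\mathbb S^{K-1}$ with $|\mathcal N|\le 5^K$, and by Observation~\ref{lem:epsilon-net} (net lifting) $\|\mathbf z\|\le 2\max_{u\in\mathcal N}|\langle u,\mathbf z\rangle|$. A union bound over $\mathcal N$ applied to the fixed-direction tail estimate gives $\Pr\bigl(\max_{u\in\mathcal N}|\langle u,\mathbf z\rangle|\ge t\mid\mathbf A\bigr)\le 2\cdot 5^K\exp\bigl(-c\,t^2/(\sigma^2\|\mathbf A\|^2)\bigr)$. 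Choosing $t=C'\sigma\|\mathbf A\|\sqrt{K+\log(1/\delta)}$ with $C'$ large enough that $c\,t^2/(\sigma^2\|\mathbf A\|^2)\ge K\log 5+\log 2+\log(1/\delta)$, the right-hand side is at most $\delta$; combined with net lifting this yields $\|\mathbf z\|\le 2C'\sigma\|\mathbf A\|\sqrt{K+\log(1/\delta)}$ with conditional probability at least $1-\delta$, and absorbing $2C'$ into $C$ gives the stated bound.

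There is no serious obstacle: the statement is a textbook consequence of sub-Gaussian concentration plus an $\varepsilon$-net argument, and every ingredient already appears in the Supporting Observations. The only points needing minor care are (i) verifying that the sub-Gaussian proxy of $\sum_r\eta_r\langle a_r,u\rangle$ is governed by $\|\mathbf A u\|^2\le\|\mathbf A\|^2$ (rather than, say, $\sum_r|\langle a_r,u\rangle|$ or a maximum), which follows from the standard rotation/additivity property of sub-Gaussian random variables; and (ii) bookkeeping the absolute constants so that the $5^K$ net factor is dominated by the exponential in $t$, which is automatic once $t^2\gtrsim\sigma^2\|\mathbf A\|^2\bigl(K+\log(1/\delta)\bigr)$. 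Since the entire argument is conditional on $\mathbf A$, no deconditioning step is required for this lemma on its own.
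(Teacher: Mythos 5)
Your proposal is correct and follows essentially the same route as the paper's own proof: a fixed-direction sub-Gaussian tail bound via $\|\langle u,\mathbf z\rangle\|_{\psi_2}\le C\sigma\|\mathbf A u\|\le C\sigma\|\mathbf A\|$, a $\tfrac12$-net of cardinality at most $5^K$ with the net-lifting inequality, a union bound, and the threshold choice $t\asymp\sigma\|\mathbf A\|\sqrt{K+\log(1/\delta)}$. No gaps.
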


\begin{proof}

Recall we condition on $\mathbf A$ and set $\mathbf z:=\mathbf A^\top\boldsymbol\eta\in\mathbb R^K$.
For any unit $u\in\mathbb S^{K-1}$, we already showed the conditional tail
\begin{equation}\label{eq:fixed-u-tail-again}
\Pr\big(|\langle u,\mathbf z\rangle|\ge t\ \big|\ \mathbf A\big)
\ \le\ 2\exp\!\Big(-c\,t^2/(\sigma^2\|\mathbf A\|^2)\Big),\qquad t>0.
\end{equation}

\paragraph*{Step A: $\varepsilon$-net on the sphere, Observation \ref{lem:sphere-cover} ---} 
Let $\mathcal N\subset\mathbb S^{K-1}$ be a $1/2$-net, i.e., for every $u\in\mathbb S^{K-1}$ there exists $v\in\mathcal N$
with $\|u-v\|_2\le 1/2$. A standard volumetric argument gives the cardinality bound
\begin{equation}\label{eq:net-card}
|\mathcal N|\ \le\ (1+2/\varepsilon)^K\ \Big|_{\ \varepsilon=1/2}\ \le\ 5^K.
\end{equation}

\paragraph*{Step B: Net lifting, Observation~\ref{lem:epsilon-net}---}
For any $x\in\mathbb R^K$ and any $1/2$-net $\mathcal N\subset\mathbb S^{K-1}$,
\begin{equation}\label{eq:lifting}
\|x\|\ \le\ \frac{1}{1-\varepsilon}\,\max_{v\in\mathcal N} |\langle v,x\rangle|\ \Bigg|_{\ \varepsilon=1/2}\ \le\ 2\,\max_{v\in\mathcal N} |\langle v,x\rangle|.
\end{equation}
Applying \eqref{eq:lifting} to $x=\mathbf z$ yields
\begin{equation}\label{eq:z-net}
\|\mathbf z\|\ \le\ 2\,\max_{v\in\mathcal N} |\langle v,\mathbf z\rangle|.
\end{equation}

\paragraph*{Step C: Union bound over the net---}
Using \eqref{eq:fixed-u-tail-again} for each fixed $v\in\mathcal N$ and the union bound,
\[
\Pr\!\Big(\|\mathbf z\|\ge 2t\ \Big|\ \mathbf A\Big)
\ \le\ \sum_{v\in\mathcal N}\Pr\!\big(|\langle v,\mathbf z\rangle|\ge t\ \big|\ \mathbf A\big)
\ \le\ |\mathcal N|\cdot 2\exp\!\Big(-c\,t^2/(\sigma^2\|\mathbf A\|^2)\Big).
\]
Using \eqref{eq:net-card}, we get
\begin{equation}\label{eq:tail-over-net}
\Pr\!\Big(\|\mathbf z\|\ge 2t\ \Big|\ \mathbf A\Big)
\ \le\ 2\cdot 5^K\ \exp\!\Big(-c\,t^2/(\sigma^2\|\mathbf A\|^2)\Big).
\end{equation}

\paragraph*{Step D: Choice of the threshold $t$---}
We choose $t$ so that the RHS of \eqref{eq:tail-over-net} is at most $\delta$, i.e.
\[
2\cdot 5^K\ \exp\!\Big(-c\,t^2/(\sigma^2\|\mathbf A\|^2)\Big)\ \le\ \delta
\quad\Longleftrightarrow\quad
\frac{t^2}{\sigma^2\|\mathbf A\|^2}\ \ge\ \frac{K\log 5+\log 2+\log(1/\delta)}{c}.
\]
Thus any
\[
t\ \ge\ \sigma\,\|\mathbf A\|\ \sqrt{\frac{K\log 5+\log 2+\log(1/\delta)}{c}}
\]
will do. Absorbing the absolute constants into $C$ (and noting $K\log 5+\log 2\le C(K+1)$) yields the convenient form
\begin{equation}\label{eq:t-choice}
t\ :=\ C\,\sigma\,\|\mathbf A\|\,\sqrt{K+\log(1/\delta)}.
\end{equation}
Combining \eqref{eq:z-net}–\eqref{eq:t-choice} we obtain, conditional on $\mathbf A$,
\begin{equation}\label{eq:Ateta-net-final}
\|\mathbf A^\top\boldsymbol\eta\|=\|\mathbf z\|\ \le\ C\,\sigma\,\|\mathbf A\|\,\sqrt{K+\log(1/\delta)}\qquad\text{with probability at least }1-\delta.
\end{equation}

\end{proof}

\end{document}